\definecolor{Green1}{rgb}{0.0, 0.50, 0.0}
\definecolor{Graphgreen}{rgb}{0.0, 0.85, 0.0}
\definecolor{GFgreen}{rgb}{0.0, 1.0, 0.0}
\definecolor{Shadowred}{rgb}{0.70, 0.0, 0.0}
\definecolor{Scottred}{rgb}{0.85, 0.0, 0.0}
\definecolor{Huberred}{rgb}{1.0, 0.0, 0.0}
\DeclareMathOperator{\sign}{sign}
\DeclareMathOperator{\imm}{imm}
\DeclareMathOperator{\per}{per}
\DeclareMathOperator{\Wg}{Wg}
\DeclareMathOperator{\wg}{wg}
\DeclareMathOperator{\tr}{tr}
\DeclareMathOperator{\SEP}{SEP}
\DeclareMathOperator{\sep}{sep}
\DeclareMathOperator{\ent}{ent}
\DeclareMathOperator{\rank}{rank}
\DeclareMathOperator{\id}{id}
\newcommand{\GHZ}{{\rm{GHZ}}}
\newcommand{\bra}[1]{\mathinner{\langle #1|}}
\newcommand{\ket}[1]{\mathinner{|#1\rangle}}
\newcommand{\dyad}[1]{| #1\rangle \langle #1|}
\newcommand{\ot}[0]{\otimes}
\newcommand{\one}[0]{\mathds{1}}
\newcommand{\vv}{\ket{v_1} \ot \dots \ot \ket{v_n}}
\newcommand{\cdn}{(\C^d)^{\otimes n}}
\newcommand{\C}{\mathds{C}}
\newcommand{\WW}{\mathcal{W}}
\newcommand{\vast}{\bBigg@{4}}
\newcommand{\Vast}{\bBigg@{5}}
\newtheorem{theorem}    {Theorem}
\newtheorem{proposition}[theorem]{Proposition}
\newtheorem{observation}[theorem]{Observation}
\newtheorem{lemma}      [theorem]{Lemma}
\begin{document}

\setstcolor{red}
\title{
Entanglement detection with trace polynomials
}
\author{Albert Rico${}^{\orcidlink{0000-0001-8211-499X}}$}
\affiliation{
Faculty of Physics, Astronomy and Applied Computer Science, Institute of Theoretical Physics, Jagiellonian University,
30-348 Krak\'{o}w, 
Poland}
\author{Felix Huber${}^{\orcidlink{0000-0002-3856-4018}}$}
\affiliation{
Faculty of Physics, Astronomy and Applied Computer Science, Institute of Theoretical Physics, Jagiellonian University,
30-348 Krak\'{o}w, 
Poland}
\date{\today}
\begin{abstract}
We provide a systematic method for nonlinear entanglement detection based on trace polynomial inequalities. 
In particular, this allows to employ multi-partite witnesses for the detection of bipartite states, and vice versa. We identify pairs of entangled states and witnesses for which linear
detection fails, but for which nonlinear detection succeeds.
With the trace polynomial formulation a great variety of witnesses arise from immanant inequalities,
which can be implemented in the laboratory through the randomized measurements toolbox.
\end{abstract}

\maketitle
The experimental detection of entanglement is an ongoing challenge~\cite{Wang_DetectNearTermDevices,MorelliImprMeas_2022,Frerot_MBodyDetect2022},
for which a key tool are {entanglement witnesses}~\cite{Guhne2009EntDet}. 
These detect some entangled states by virtue of having a negative expectation value,
separating them from the set of fully separable states, i.e. from convex combinations of product states.
While every entangled state can be detected by some witness, the construction of witnesses is not a straightforward task~\cite{Guhne2009EntDet}
and frequently relies on making use of specific structure in the state to be detected~\cite{Guhne_DetectGraph,Toth_DetectStabilizer2005,Dagmar_DetectHypergraph2017,Huber2022DimFree,Jafarizadeh_DetectBellDiag2005,Duan_DetectContVar2000,Simon_DetectContVar2000,Marconi_EntSymmetric2021}. 

{\em Nonlinear} entanglement detection has become a recent focus of attention due to the development of the randomized measurement toolbox~\cite{Elben22Toolbox},
making local unitary invariants like partial transpose moments experimentally accessible through
single-copy measurements~\cite{Elben2020,Neven_SymmetryResMomentsPT2021}.
%From a conceptual perspective, random measurements replace the need to know some entries of the density matrix by the need to know some local unitary invariants.
However, the currently available techniques are limited and a systematic development of nonlinear witnesses is desireable.
The aim of this paper is to provide such a systematic method that is not only suitable for the randomized measurement framework,
but also makes a broader use of known constructions.
% The aim of this paper is to provide a systematic method that not only provides a family of local unitary invariants accessible with randomized measurements,
% but also makes a broader use of known constructions.
% Here, instead of constructing new (linear and nonlinear) entanglement witnesses,
% the aim of this paper is to make better use of existing ones to increase their overall applicability.
The basic task we study is the following:
{\it given an entanglement witness $W$,
can it be employed also in a nonlinear fashion as to detect entanglement in multiple copies  $\varrho^{\otimes k}$?}

Here we answer this question in the affirmative.
% We demonstrate that this is possible with local measurements,
% such that nonlinear methods detect a different set of states than their linear cousins.
In particular we show that:
\noindent i) having access to multiple copies of the state 
it is possible to detect entanglement locally,
where the size of states and witnesses can be different
[Observation~\ref{obs:23} and Figs.~\ref{fig:AliceBob2copies} and \ref{fig:DetIso}];
ii) that
there exist pairs of states and witnesses for which linear entanglement detection fails, but for which nonlinear detection succeeds [Observation~\ref{obs:IsotropicConcentration}]; and iii) that there is a large class of nonlinear witnesses arising from  trace polynomial inequalities [Fig.~\ref{fig:WernerDetProjections}].
These can analytically be treated in the group ring $\C S_n$ and are experimentally accessible through randomized measurements~\cite{Elben22Toolbox}.

\smallskip

\begin{figure}[tbp]
\includegraphics[]{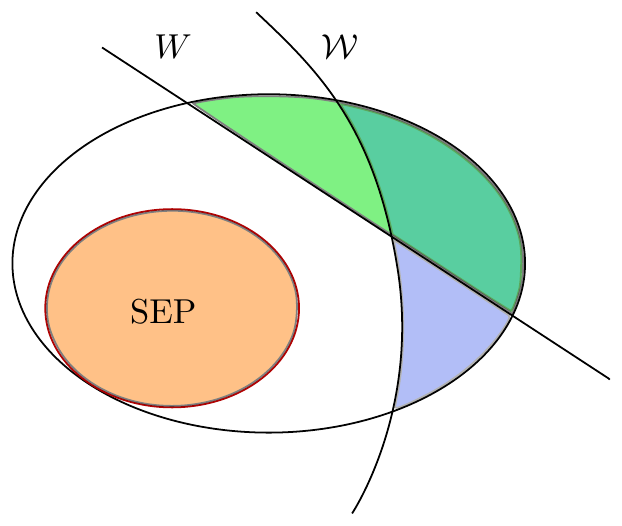}
    \caption{{\bf Linear versus nonlinear entanglement detection.} A linear witness $W$ defines a hyperplane in the state space, separating some entangled states (green area) from the rest; a nonlinear witness $\WW$ cuts the state space analogously in a nonlinear hypersurface, thus detecting a different set of entangled states (blue area).}
\label{fig:NonlinearCut}
\end{figure}

{\it Nonlinear entanglement witnesses.~---}
A quantum state is called separable if it can be written as convex combination of product states,
\begin{equation}
\varrho = \sum_i p_i \varrho_i^{(1)} \ot \varrho_i^{(2)}\ot \dots \ot \varrho_i^{(n)}\,,
\end{equation}
and entangled otherwise.
Entanglement detection with a witness works in the following way: 
to detect an entangled state $\varrho_{\ent}$,
a witness is an observable $W$ such that $\tr(W\varrho_{\ent}) < 0$ holds, while
$\tr(W\varrho_{\sep}) \geq 0$
for all separable states $\varrho_{\sep}\in\SEP$.
In this way, $W$ acts as a hyperplane, 
separating  a subset of entangled states from the rest.
We call this {\it linear detection}, in contrast to nonlinear detection introduced below.
This is illustrated in Fig.~\ref{fig:NonlinearCut}.

How can one find {\em nonlinear} witnesses $\WW$?
A range of methods are based on spin-squeezing inequalities and purity inequalities~\cite{HoroMulticopyWit2003,
KotowskiUniversalNonlinWit_2010,
Gessner_NonlinSpinSqueezing2019,
Trenyi_MulticopyMetrology2022}, nonlinear corrections to linear witnesses~\cite{Guhne_NonlinearImproveLinear_2006}, %Saggio_FewCopyDetection2022
and multicopy scenarios show surprising entanglement activation properties~\cite{Yamasaki2022activationofgenuine}.
In a multi-copy scenario, one asks that a witness satisfies
$\tr(\WW\varrho_{\ent}^{\ot k}) < 0$
, while 
$  \tr\big (\WW\varrho_{\sep}^{\ot k}\big ) \geq 0
$
for all separable states $\varrho_{\sep}\in\SEP$.
Setting $k=1$ then recovers the standard use of witnesses.

Our approach here is to take a tensor product of linear witnesses,
\begin{equation}\label{eq:NonlinearWitness}
    \WW = W_1 \ot \dots \ot W_n\,.
\end{equation}
Naturally, the expectation values need to be computed in a manner such that the dimensions of $\WW$ and $\varrho^{\ot k}$ match (see Fig.~\ref{fig:RowColNotation} for a detailed explanation).
For example, for a tripartite state $\varrho_{ABC}$, take the tensor product of three bipartite witnesses,
\begin{equation}\label{eq:example_eval}
 \WW = U_{AA'} \ot V_{BB'} \ot W_{CC'}\,.
\end{equation}
Then, the expression
\begin{equation}\label{eq:example}
 \langle \WW \rangle_{\varrho^{\ot 2}}    = \tr\big(\WW (\varrho_{ABC} \ot \varrho_{A'B'C'} ) \big) \,
\end{equation}
is non-negative if $\varrho$ is separable.
\begin{figure}[tbp]
    \centering
    \includegraphics[scale=0.96]{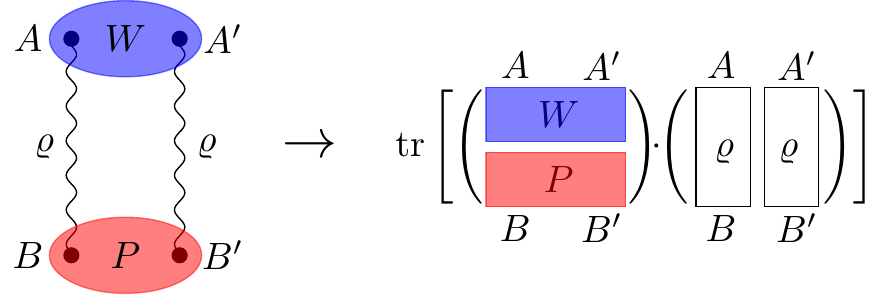}
    \caption{{\bf Sketch of the entanglement concentration scheme.} Expressions of the type $\langle W\ot P\rangle_{\varrho^{\ot 2}}$ can be obtained by first measuring two copies of Bob's subsystems (in red), and then two copies of Alice's subsystems (in blue).
    When $\varrho = \dyad{\psi}$ is a pure state such that $\ket{\psi}=\one\ot S\ket{\phi^+}$ with $S$ invertible, Bob can teleport his part of $\varrho$ to Alice by using $P=\dyad{\varphi}$ with $\ket{\varphi}=\one\ot {S^{\dag}}^{-1}\ket{\phi^+}$. Then standard linear witness evaluation $\tr\big (W\dyad{\psi}\big )$ is obtained as a particular case of the nonlinear method proposed in this letter (Appendix~\ref{app:LinearFromTelep}).}
    \label{fig:AliceBob2copies}
\end{figure}
At first sight, it may not be clear why Eq.~\eqref{eq:example} can detect entanglement, since the witnesses act along the cuts $A|A'$, $B|B'$ and $C|C'$ and $\varrho_{ABC}^{\ot 2}$ is separable in the cut $ABC|A'B'C'$. This apparent contradiction is resolved by realizing that the tensor product of witnesses for $A|A'$, $B|B'$, $C|C'$ is not necessarily a witness for $ABC|A'B'C'$.
The following shows that this method can indeed work:
\begin{observation}\label{obs:23}
Bipartite witnesses can be used to detect multi-partite entangled states nonlinearly.
\end{observation}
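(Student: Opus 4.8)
The plan is to verify the two halves that make $\WW = U_{AA'} \ot V_{BB'} \ot W_{CC'}$ a nonlinear witness for $\varrho^{\ot 2}$ in the sense introduced above: the \emph{witness} property $\tr(\WW \varrho_{\sep}^{\ot 2}) \geq 0$ for every fully separable $\varrho_{\sep}$, which I expect to hold in full generality, and the \emph{detection} property $\tr(\WW \varrho_{\ent}^{\ot 2}) < 0$ for at least one multipartite entangled $\varrho_{\ent}$. Because the observation only claims feasibility, exhibiting a single such $\varrho_{\ent}$ together with suitable bipartite $U, V, W$ suffices.

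First I would establish nonnegativity. Writing a separable state as $\varrho_{\sep} = \sum_i p_i \alpha_i \ot \beta_i \ot \gamma_i$ with $\alpha_i, \beta_i, \gamma_i$ local states, the two-copy state regroups, after pairing $A$ with $A'$, $B$ with $B'$ and $C$ with $C'$, as $\sum_{i,j} p_i p_j (\alpha_i \ot \alpha_j)_{AA'} \ot (\beta_i \ot \beta_j)_{BB'} \ot (\gamma_i \ot \gamma_j)_{CC'}$. Evaluating $\WW$ then factorizes each summand into $\tr[U(\alpha_i \ot \alpha_j)]\,\tr[V(\beta_i \ot \beta_j)]\,\tr[W(\gamma_i \ot \gamma_j)]$; each factor is a bipartite witness expectation on a bipartite product state and is hence nonnegative, and $p_i p_j \geq 0$, so the total is nonnegative. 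This is exactly the resolution of the apparent paradox raised before the statement: $\WW$ is a product of witnesses for the fine cuts rather than a witness for $ABC | A'B'C'$, so the coarse-cut separability of $\varrho^{\ot 2}$ does not protect it.

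For detection I would follow the teleportation picture of Fig.~\ref{fig:AliceBob2copies}. Take $U$ to be a genuine bipartite witness and let $V, W$ be rank-one projectors onto entangled vectors that implement teleportation across the respective copy pairs; being positive, these projectors trivially satisfy the witness inequality and so leave the nonnegativity argument intact, while inside $\tr(\WW \varrho_{\ent}^{\ot 2})$ they act as teleportation devices that swap the $B$- and $C$-shares of one copy onto the other. For a genuinely multipartite pure state $\varrho_{\ent} = \dyad{\psi}$ this reduction collapses the two-copy expectation to a single-copy value $\tr(U \dyad{\eta})$, where $\ket{\eta}$ is the bipartite state teleported onto the $AA'$ pair; choosing $U$ so as to detect $\ket{\eta}$ makes the expression negative.

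The main obstacle is the detection step. One must choose a genuinely multipartite entangled $\ket{\psi}$ for which the teleported bipartite state $\ket{\eta}$ inherits enough entanglement to be caught by a fixed bipartite $U$, and verify the teleportation identity exactly, tracking the normalization and the noninvertibility corrections $\ket{\varphi} = \one \ot (S^\dagger)^{-1}\ket{\phi^+}$ that enter when the channel-state map $S$ is not unitary. The cleanest route is to fix a small concrete instance, such as a three-qubit state, and compute the negativity directly while confirming that every factor of $\WW$ simultaneously meets its bipartite witness condition.
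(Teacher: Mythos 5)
Your first half is sound and complete: for a fully separable $\varrho_{\sep}$, the two copies regroup along the pairs $AA'$, $BB'$, $CC'$ into a convex combination of \emph{product} states on each pair, every factor $\tr\big(U(\alpha_i\ot\alpha_j)\big)$ is then a bipartite witness evaluated on a bipartite product state, and non-negativity follows. This is exactly the resolution of the apparent paradox described in the text before the observation, and replacing some witness slots by positive semidefinite projectors does not disturb it.

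The gap is in the second half. The entire nontrivial content of the observation is the existence of a concrete triple of bipartite operators and a multipartite entangled state with strictly negative two-copy expectation value, and your proposal never exhibits one: it ends by saying one should ``fix a small concrete instance \dots and compute the negativity directly,'' which is precisely what the paper's proof consists of (the paper takes $W=\one-X\ot X-Z\ot Z$, $V=\dyad{\phi^+}^{\Gamma}$ and the three-qubit GHZ state and reads off a negative number). Your teleportation route does close, and with less machinery than you anticipate: with rank-one projectors in two slots there is no invertibility or normalization subtlety to track, one simply computes an overlap. Concretely, take $U=\one-X\ot X-Z\ot Z$ on $AA'$ and the projectors $\dyad{\phi^+}$ on $BB'$ and $CC'$. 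Writing
\begin{equation}
\ket{\GHZ}^{\ot 2}=\tfrac{1}{2}\sum_{i,j\in\{0,1\}}\ket{ij}_{AA'}\ket{ij}_{BB'}\ket{ij}_{CC'}\,,
\end{equation}
projecting $BB'$ and $CC'$ onto $\ket{\phi^+}$ leaves the unnormalized vector $\ket{\phi^+}_{AA'}/(2\sqrt{2})$, hence
\begin{equation}
\tr\Big(\big(U\ot \dyad{\phi^+}\ot\dyad{\phi^+}\big)\,\GHZ^{\ot 2}\Big)=\tfrac{1}{8}\,\bra{\phi^+}U\ket{\phi^+}=-\tfrac{1}{8}<0\,.
\end{equation}
With this line added your argument is complete. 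Note that it then differs in flavor from the paper's proof of this observation, which places genuine witnesses in all slots of the tensor product: your construction, with positive projectors occupying two of the three slots, is an instance of the paper's entanglement-concentration scheme (Observation~\ref{obs:IsotropicConcentration} and Appendix~\ref{app:LinearFromTelep}), so what it buys is a reduction of the two-copy evaluation to a single-copy witness evaluation, at the price of using only one of the three factors as an actual witness.
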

To see this take the $2$-qubit witnesses~\cite{Hyllus_2005QWitnessesBellIn}
\begin{equation}\label{eq:WitnessesBell}
\begin{aligned}
    W &= \one - X\ot X - Z\ot Z\,, \quad &
    V &= \dyad{\phi^+}^{\Gamma}\,,
\end{aligned}
\end{equation}
where $\Gamma$ is the partial transpose and 
$
\ket{\phi^+}=(\ket{00}+\ket{11})/\sqrt{2}
$; 
and the Greenberger-Horne-Zeilinger state $\varphi = \dyad{\GHZ}$ with
$
\ket{\GHZ
}=(\ket{000}+\ket{111})/\sqrt{2}
$.
Then
\begin{equation}
\tr\big( (W_{AA'} \ot W_{BB'} \ot V_{CC'}) ({\varphi_{ABC}}^{\ot 2}) \big) 
=
-1/2\,.
\end{equation}
%
%Not surprisingly, one can employ in this way witnesses which cannot detect linearly the state in hand. 
%In the above example this is clear because the two-qubit reduction of the GHZ state is separable. 
An example with $k=n$ is the detection of the Bell state $\ket{\psi^+}=(\ket{01}+\ket{10})/\sqrt{2}$, where the witnesses of Eq.~\eqref{eq:WitnessesBell} give $\langle W\ot V\rangle_{{\psi^+}^{\ot 2}}=-1/2$.
\smallskip

\noindent {\it Entanglement concentration ---}
\begin{figure}[tbp]
\centering
\includegraphics[]{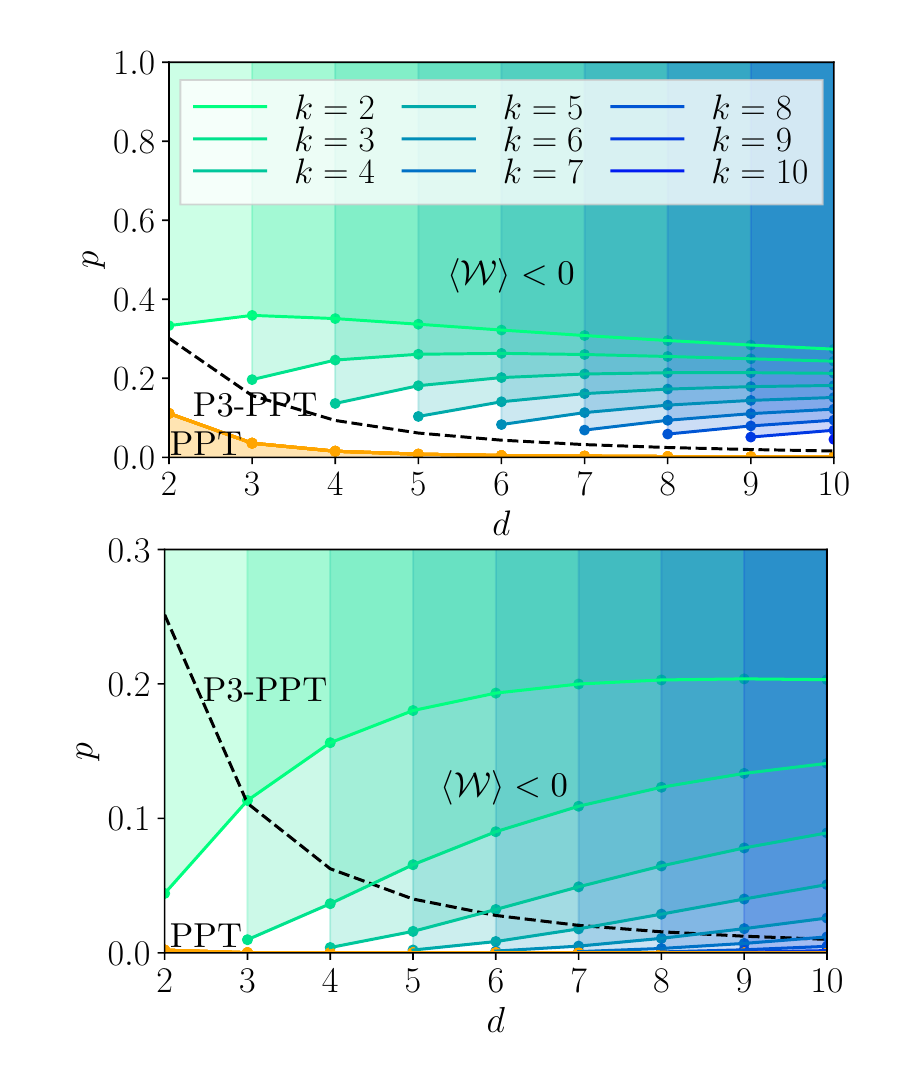}
\caption{{\bf Nonlinear detection of noisy 4- (top) and 10-partite (bottom) GHZ states.} Shown are the detection curves with $k=2,3,...,10$ copies of noisy n-partite GHZ states shared among an even number of parties $n$ with local dimension $d$ (top $n=4$; bottom $n=10$). We assume $d\geq k$ as otherwise the anti-symmetrizer vanishes.
All states with positive partial transpositions (PPT) are in the orange region, and hence states outside are entangled;
the green-blue regions are states that are detected by
Eq.~\eqref{eq:GHZ_W} with $W=\one-k!P_{1^k}$ and $\WW=W\ot P_{1^k}^{\ot n-1}$, even though $\tr(W\varrho)\geq 0$. Entangled states that are not detected by $\WW$ are in the white region, suggesting that detection gains robustness with the number of copies. The dashed line denotes the detection threshold by the P3-PPT criterion proposed in~\cite{Neven_SymmetryResMomentsPT2021}.
The evaluation of $\langle\WW\rangle$ and the P3-PPT criterion in terms of moments of the partial transpose
are detailed in Appendix~\ref{app:GHZdetection}.
\label{fig:DetIso}
}
\end{figure}
If in Eq.~\eqref{eq:NonlinearWitness} we replace some witnesses $W_i$ by positive semidefinite operators $P_i$, the expectation value with respect to $k$ copies of an $n$-qudit separable state remains non-negative. The following observation shows that this way, a witness with non-negative expectation value with respect to a certain entangled state can detect its entanglement.
\begin{observation}\label{obs:IsotropicConcentration}
There exist pairs of states $\varrho$ and witnesses $W$ for which
$\tr(W \varrho) \geq 0$, 
but for which there is $P\geq 0$
such that 
\begin{equation}
\tr\big ((W \ot P) \varrho ^{\ot k}\big ) < 0\,.
\end{equation}
In short: linear entanglement detection with $W$ fails, but nonlinear detection with $W \ot P$ succeeds.
\end{observation}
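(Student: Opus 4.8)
\emph{Proof proposal.}
The plan is to exhibit an explicit triple $(\varrho, W, P)$, and the key simplification is to rewrite the two-copy expectation as a single \emph{linear} witness evaluation on a state that Bob prepares by measurement. Placing $W$ on Alice's two copies and $P$ on Bob's two copies as in Fig.~\ref{fig:AliceBob2copies}, I first define the positive operator
\begin{equation}
\tilde\sigma = \tr_{BB'}\big( P_{BB'}\,(\varrho_{AB}\ot\varrho_{A'B'})\big)\geq 0\,,
\end{equation}
which is Hermitian and positive semidefinite by partial cyclicity together with the factorization $P=\sqrt P\,\sqrt P$. Normalizing $\sigma = \tilde\sigma/\tr(\tilde\sigma)$ yields a bipartite state on Alice's systems, and
\begin{equation}
\tr\big((W\ot P)\varrho^{\ot 2}\big) = \tr(\tilde\sigma)\,\tr(W\sigma)\,,
\end{equation}
so the nonlinear expectation is negative if and only if the fixed witness $W$ linearly detects the concentrated state $\sigma$. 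The task thus reduces to finding an entangled $\varrho$ with $\tr(W\varrho)\geq 0$ together with a $P\geq 0$ whose conditionally prepared $\sigma$ satisfies $\tr(W\sigma)<0$.

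For the explicit construction I would work in the isotropic family $\varrho(F)$ parametrized by the singlet fraction $F=\tr(\dyad{\phi^+}\varrho)$, which is entangled precisely for $F>1/d$. Fixing a witness $W$ not tuned to this family, its linear threshold satisfies $F_W>1/d$, so choosing $\varrho=\varrho(F)$ with $1/d<F<F_W$ makes linear detection fail. The $k$-copy expectation $\tr(\WW\varrho^{\ot k})$ is then a degree-$k$ trace polynomial in $F$, which by Schur--Weyl duality can be evaluated in the group ring $\C S_k$; the remaining freedom is the choice of $P$, which I would fix as a filtering step on Bob's copies so as to push $\sigma$ across the hyperplane of $W$.

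The main obstacle is that the most symmetric choices of $P$ do not suffice: a $U\ot\bar U$-covariant (Bell-type) projection maps $\varrho(F)$ to another isotropic state but strictly decreases its singlet fraction, since channel composition sends the depolarizing parameter $\eta$ to $\eta^2$, and hence it can never cross a fidelity threshold. The construction must therefore use a non-covariant $P$ producing a non-isotropic $\sigma$ whose entanglement is aligned with the specific geometry of the sub-optimal $W$, and the crux is to verify the strict inequality $\tr(W\sigma)<0$ on a nonempty interval of $F$ below $F_W$. I expect this verification to reduce to checking a single strict polynomial inequality for fixed small $d$ and $k$; passing to more copies enlarges the admissible window, which accounts for the gain in robustness.
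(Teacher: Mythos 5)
There is a genuine gap: your proof never produces the object whose existence is being claimed. The reduction in your first step is sound — writing $\tr\big((W\ot P)\varrho^{\ot 2}\big)=\tr(\tilde\sigma)\,\tr(W\sigma)$ with $\tilde\sigma=\tr_{BB'}\big(P_{BB'}(\varrho_{AB}\ot\varrho_{A'B'})\big)\geq 0$ is exactly the entanglement-concentration picture the paper itself uses around Eq.~\eqref{eq:ConcAliceSide} — but it only restates the problem: Observation~\ref{obs:IsotropicConcentration} is an existential statement, and everything now hinges on exhibiting a concrete triple $(\varrho,W,P)$ and verifying the strict inequality $\tr(W\sigma)<0$ while $\tr(W\varrho)\geq 0$. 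Precisely these steps are left open in your proposal: $W$ is only described as ``not tuned to this family,'' $P$ as ``a filtering step on Bob's copies,'' and the final negativity check as something you ``expect'' to reduce to a polynomial inequality for small $d$ and $k$. Ruling out $U\ot\bar U$-covariant Bell-type projections (a correct and useful observation) narrows the search space but does not finish the argument; what remains undone is the entire content of the statement.

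For comparison, the paper carries this program to completion explicitly: it takes the noisy GHZ state of Eq.~\eqref{eq:noisyGHZgeneral} (which for $n=2$ is exactly your isotropic family), the antisymmetrizer-based witness $W=\one-k!P_{1^k}$ coming from the Hadamard inequality (Eq.~\eqref{eq:Witness1-Pa}), and the positive operator $P=P_{1^k}^{\ot (n-1)}$, and then proves negativity via the closed-form evaluation of Lemma~\ref{lem:NonlinWitGHZ} in Appendix~\ref{app:GHZdetection}: expanding $\varrho^{\ot k}$ binomially and using the known partial traces of $P_{1^k}$, one gets an explicit expression for $\tr(\WW\varrho^{\ot k})$ that is negative on a whole interval of $p$ where $\tr(W\varrho)\geq 0$ (Fig.~\ref{fig:DetIso}). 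Note that the paper's $P$ is $U^{\ot k}$-covariant (an antisymmetrizer), so the concentrated state on Alice's side is Werner-like rather than isotropic — this is consistent with, and resolves, the obstruction you identified, since the concentration map escapes the isotropic family instead of composing depolarizing parameters. To repair your proof you would need to commit to this (or some other) explicit choice of $W$ and $P$ and actually perform the verification you deferred; as written, the proposal is a plausible roadmap rather than a proof.
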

Consider the $n$-qudit Greenberger-Horne-Zeilinger state $\ket{\GHZ}= \tfrac{1}{\sqrt{d}}\sum_{i=0}^{d-1}\ket{i}^{\ot n}$ affected by white noise,
\begin{equation}\label{eq:noisyGHZgeneral}
    \varrho = p\dyad{\GHZ} + \frac{1-p}{d^n}\one\,,
\end{equation}
whose experimental generation and detection plays a central role in quantum computing and communication tasks~\cite{ChenGHZCriptography_2004,Bishop_GHZExp_2009,MooneyGHZExp_2021}.
Take the $n$-qudit witness
\begin{equation}\label{eq:Witness1-Pa}
    W = \one - n!P_{1^n}\,,
\end{equation}
where $P_{1^n}$ is the projector onto the fully antisymmetric subspace of $n$ qudits (given in Eq.~\eqref{eq:W1-P111} for $n=3$ and arbitrary dimension). 
The state $\varrho$ has a nonpositive partial transpose (and thus is entangled)
for $p > 1/(1+d^{n-1})$~\cite{GuhneGHZPPTqubits_2010,GabrielCritKsepMixedGHZ_2010}.

%Since $\tr\left(\dyad{\GHZ}P_{1^n}\right)=0$, we have
%\begin{equation}
%    \tr(W\varrho)=1
%\end{equation}
%for all values of $p$
%and thus single-copy detection fails.
While $\tr(W\varrho)\geq 0$, given $\WW = (\one-k!P_{1^k})\ot P_{1^k}^{\ot n-1}$ one verifies that
\begin{equation}\label{eq:GHZ_W}
    \tr(\WW\varrho^{\ot k})<0
\end{equation}
for a range of values of $p$ (see Fig.~\ref{fig:DetIso}).
% Although $\tr(W\varrho)\geq 0$, one verifies that
% \begin{equation}\label{eq:GHZ_W}
%     \tr(\WW\varrho^{\ot k})<0
% \end{equation}
% for values of $p$
% shown in Fig.~\ref{fig:DetIso} with the nonlinear witness $\WW = (\one-k!P_{1^k})\ot P_{1^k}^{\ot n-1}$.
The evaluation of Eq.~\eqref{eq:GHZ_W} is given in Appendix~\ref{app:GHZdetection}.

With this prespective, one can generalize Eq.~\eqref{eq:NonlinearWitness} and evaluate $n$ observables $O_i\in L((\mathbb{\C}^d)^{\ot k})$ on $k$ copies of $\varrho\in L((\mathbb{\C}^d)^{\ot n})$ as
\begin{equation}\label{eq:non-lin_precise}
    \tr\left(\left( O_1\ot...\ot O_n\right) \varrho^{\ot k} \right)\,,
\end{equation}
thus certifying that $\varrho$ is entangled if the result is negative and the operators $O_i$ are either positive semidefinite operators or witnesses
(illustrated in Fig.~\ref{fig:RowColNotation}).
This way of combining positive operators and witnesses can be understood as follows: suppose Alice and Bob share two copies of an entangled state $\varrho_{AB}$ and $\varrho_{A'B'}$. If Bob measures $\one_{AA'}\ot P_{BB'}$, then with probability  
$
\tr(P\varrho_B\ot\varrho_{B'})
$, the state on Alice's side reads
\begin{equation}\label{eq:ConcAliceSide}
\xi_{AA'}=\frac{\tr_{BB'}\big ((\one_{AA'}\ot P_{BB'})(\varrho_{AB}\ot\varrho_{A'B'})\big )}{\tr(P\varrho_B\ot\varrho_{B'})}\,,
\end{equation}
where $\varrho_B=\tr_A(\varrho_{AB})$. 
If Alice measures 
$
\tr(W\xi) < 0
$ for some witness $W$, then $\varrho$ is entangled. 
In this case, Bob's measurement has concentrated the entanglement of $\varrho^{\ot 2}$ present in the partition $AA'|BB'$ into the subsystem $AA'$. This procedure is illustrated in Fig.~\ref{fig:AliceBob2copies}. 

For most pure states, this procedure of entanglement concentration on Alice's side can be understood in terms of entanglement swapping as follows. Suppose $\varrho$ is a pure state $\ket{\psi}=\one\ot S\ket{\phi^+}$ where the matrix $S$ is invertible with $\tr(SS^\dag) = 1$
and $\ket{\phi^+} = \sum_{i=0}^{d-1} \ket{ii} / \sqrt{d}$
~\footnote{This is the case if and only if the Schmidt rank of $\ket{\phi^+}$ and $\ket{\psi}$ are equal}. 
Then, by projecting $BB'$ onto the state $\ket{\varphi}=\one\ot {S^{\dag}}^{-1}\ket{\phi^+}$, Bob ($BB'$) effectively teleports his part of $\ket{\psi}$ to Alice ($AA'$) with nonzero probability using a second copy of the shared state $\ket{\psi}$ itself. When then Alice measures a witness $W$ the protocol reduces to standard linear detection,
evaluating 
$\tr(W\dyad{\psi})$. This is detailed in Appendix~\ref{app:LinearFromTelep}.

\smallskip
\noindent {\it Trace polynomial witnesses.~---}
How can one find suitable nonlinear witnesses that are simple to work with experimentally?
Here we focus on trace polynomials, due to the fact that these are experimentally accessible through the randomized measurement toolbox~\cite{Elben22Toolbox} (see Appendix~\ref{app:RandMes}).
A systematic construction of symmetric trace polynomials arises as follows:
matrix inequalities such as the Hadamard inequality,
\begin{equation}\label{eq:HadamardIneq}
    \prod_{i=1}^n A_{ii}\geq\det(A)
\end{equation}
for any $n\times n$ positive semidefinite matrix $A$, have been a long-standing topic of investigation~\cite{merris1997multilinear,GenMatFuncs1965}. 
In particular, the so-called  {\it immanant inequalities} generalize Eq.~\eqref{eq:HadamardIneq} and provide a large supply of $n$-partite linear witnesses of the form~\cite{MaassenSlides}
\begin{equation}
    W = \sum_{\sigma\in S_n}a_\sigma\eta_d(\sigma)\,,
\end{equation}
where $a_\sigma\in\C$ and $\eta_d(\sigma)$ is the representation of $\sigma\in S_n$ permuting the $n$ tensor factors of $(\mathbb{\C}^d)^{\ot n}$,
\begin{equation}
\eta_d(\sigma)\ket{i_1} \ot \dots\ot \ket{i_n}
=
\ket{i_{\sigma^{-1}(1)}} \ot \dots \ot \ket{i_{\sigma^{-1}(n)}}\,.
\end{equation}
How to transform matrix inequalities to witnesses in the symmetric group algebra over the complex numbers, $\C S_n = \{\sum_{\sigma \in S_n} a_\sigma \sigma \,|\, a_\sigma \in \C \}$, is sketched in Appendix~\ref{app:WitsFromImIneq}. For example, from Eq.~\eqref{eq:HadamardIneq} 
one obtains the $n$-qudit witness of Eq.~\eqref{eq:Witness1-Pa}. A few standard immanant inequalities and their corresponding witnesses are listed in Table~\ref{tab:ImmInWit}, and an extended list of nonlinear witnesses of different forms is given in Appendix~\ref{app:WitsFromImIneq}.

By recycling linear witnesses or positive operators $O_i\in\mathbb{\C}S_k$ arising from immanant inequalities and using them in the nonlinear way of Eq.~\eqref{eq:non-lin_precise}, the resulting nonlinear witness $\WW=O_1\ot...\ot O_n$ has interesting features.
Its expectation value $\langle\WW\rangle_{\varrho^{\ot k}}$ is an homogeneous polynomial of degree $k$ in $\varrho$,
acting on the copies of subsystems as {\it trace polynomials}~\cite{Huber2020PositiveMA,Klep_TrPolOptim2021}.
Since by the Schur-Weyl duality the action of the symmetric group $S_k$ commutes with that of
$k-$fold tensor products $X^{\ot k}$, the expression $\langle\WW\rangle_{\varrho^{\ot k}}$ is local unitary invariant.
Similarly to the recently introduced P3-PPT condition and other local unitary invariant quantities~\cite{Neven_SymmetryResMomentsPT2021,Elben2020},
trace polynomial witnesses can experimentally be measured using the randomized measurements approach.
There local projective measurements in random bases allow to estimate the expectation value of $k$-copy observables~\cite{Elben22Toolbox,ZhenhuanCorrelationLRMs_2022}.
This way, trace polynomial witnesses can be evaluated with classical post-processing by matrix multiplication,
thus avoiding storing large tensor products of matrices (see Appendix~\ref{app:RandMes}).

Insights into the strength of this approach can be obtained using Haar integration (Theorem 4.3 in~\cite{KuengHaar_2019}). This allows to compute the average expectation value of nonlinear trace polynomial witnesses with respect to Haar random states $\ket{\psi}\in\C^{d}$ with $d=d_1\cdots d_n$ over the Haar measure $\mathbf{d}_\mu$,
\begin{equation}\label{eq:HaarRandomState}
    \mathbb{E}[\langle\WW\rangle_{\psi}] = \int_{U\in\mathbf{U}(d)} \tr\left( \WW \, U\ket{\psi}\bra{\psi}U^{\dag} \right)\mathbf{d}_{\mu}(U).
\end{equation}
For example, entanglement concentration of a Haar-random $n$-partite state $\ket{\psi}$ with $n$ even and the witness $W=\one-k!P_{1^k}$ in Eq.~\eqref{eq:Witness1-Pa} yields a negative value on average (Appendix~\ref{app:HaarAverage}).

This approach also allows to design separability criteria based on the spectra of the state and its reductions.
For example, if a state $\varrho\in\C^d\ot\C^d$ with $\rank(\varrho)<d$ has a maximally mixed marginal, then it is entangled
(Observation~\ref{obs:CriteriaRank} in Appendix~\ref{App:Spectra}).

\begin{table}[tbp]
\begin{tabular}{ l  c  c  c  c  c  r  @{} l  c  c  c  c  c  r @{} l }
\toprule
{\bf Name} & & & & & &
{\bf Inequal} & {\bf ity} & & & & & &
{\bf Wit} & {\bf ness} \\
[4pt]
\midrule
\\
Hadamard & & & & & &
$\prod_iA_{ii}\,$ & $\geq\det(A)$ & & & & & &
$\one$ & $\,-\,n!P_{1^n}$ \\
[7pt]
Schur & & & & & &
$\frac{\imm_\lambda(A)}{\chi_\lambda(\id)}\,$ & $\geq\det(A)$ & & & & & &
$\frac{P_{\lambda}}{\chi_{\lambda}^2(\id)}$ & $\,-\,P_{1^n}$ \\
[7pt]
Hook & & & & & &
$\frac{\imm_{\lambda}(A)}{\chi_\lambda(\id)}\,$ & $\geq\frac{\imm_{\lambda'}(A)}{\chi_{\lambda'}(\id)}$ & & & & & &
$\frac{P_{\lambda}}{\chi_{\lambda}^2(\id)}$ & $\,-\,\frac{P_{\lambda'}}{\chi_{\lambda'}^2(\id)}$ \\
[7pt]
Marcus & & & & & &
$\per(A)\,$ & $\geq\prod_iA_{ii}$ & & & & & &
$n!P_{n}$ & $\,-\,\one$ \\
[7pt]
Permanent & & & & & &
$\per(A)\,$ & $\geq\frac{\imm_\lambda(A)}{\chi_\lambda(\id)}$ & & & & & &
$P_{n}$ & $\,-\,\frac{P_\lambda}{\chi_\lambda^2(\id)}$ \\
[7pt]
\bottomrule
\end{tabular}
\caption{{\bf Immanant inequalities and their corresponding witnesses.}
Listed are immanant inequalities and their corresponding entanglement witnesses.
Here  $\chi_\lambda$ is the character of the irreducible representation of the symmetric group $S_n$
labeled by the partition $\lambda \vdash n$; and $P_\lambda$ are Young projectors.
The Hook inequalities hold for hook tableaux with shape $\lambda = (j,1^{n-j})$
and then $\lambda' = (j-1,1^{n-j+1})$.
These concepts are further explained in Appendix~\ref{app:WitsFromImIneq}, with explicit examples for $n=3$.
The permanent dominance inequality is conjectured but has not yet been proven~\cite{merris1997multilinear}.\label{tab:ImmInWit}}
\end{table}

\smallskip
{\it Testing with Werner states.~---} The exponential growth of the Hilbert space with the number of subsystems limits our capability to numerically test this approach.
Using trace polynomial witnesses $W\in\C S_k$
allows to sample states and witnesses of nontrivial sizes with a symbolic algebra package,
which is computationally cheaper than computing the expectation value of exponentially large matrices.

\begin{figure}[tbp]
\centering
\includegraphics[scale=0.84]{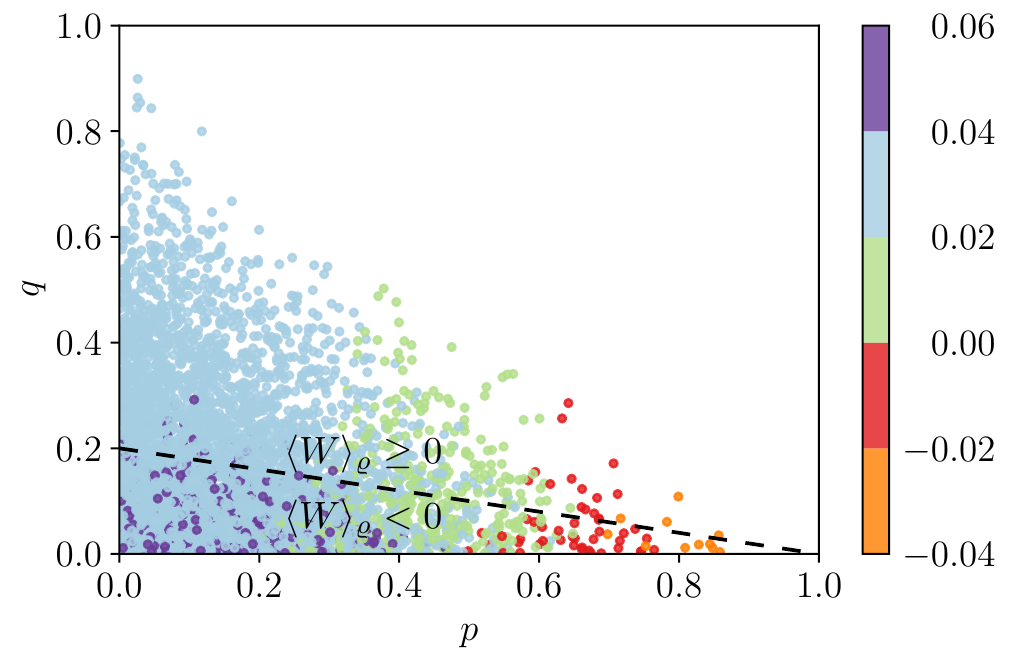}
\caption{{\bf Detection of 3 qutrit random Werner states using 3 copies.} We represent in colored dots a sample of 5000 random Werner states $\varrho$. The horizontal and vertical axes show their projections onto the antisymmetric and symmetric subspaces respectively, $p=\tr(\varrho P_{111})$ and $q=\tr(\varrho P_3)$, such that $\tr(\varrho P_{21})=1-p-q$. States below (above) the analytical dashed line are detected (not detected) linearly as $\langle W\rangle_{\varrho}<0$ ($\langle W\rangle_{\varrho}\geq 0$). The color gradient represents the expectation value $\langle\WW\rangle_{\varrho^{\ot 3}}$ with $\WW=W^{\ot 2}\ot P_{21}$, where $W=P_3-\tfrac{P_{21}}{4}$ is given in Eq.~\eqref{eq:WP3-P21} up to a global factor. This suggests that states with larger component in the antisymmetric subspace
are better detected by $\WW$.}\label{fig:WernerDetProjections}
\end{figure}

Sampling states and witnesses symbolically relies on group ring states and can be summarized as follows (for a detailed explanation see Appendix~\ref{app:WGFormalism}).
On the symmetric group $S_n$, a trace can be defined as
\begin{equation}
\tau(\sigma)=
\begin{cases}
n!\quad\text{if $\sigma=\id$}\,, \\
0\quad\text{else}\,.
\end{cases}
\end{equation}
Ref.~\cite{Huber2022DimFree} showed that given $r\in S_n$
with support only in $(\ker\eta_d)^\perp$
and $b\in\C S_n$,
it holds that
\begin{equation}\label{eq:Trace-Tau}
n!\tr\big (\Wg(d,n)\eta_d(r)\eta_d(b)\big ) = \tau(rb)\,,
\end{equation}
where $\Wg(d,n)$ is the Weingarten operator~\cite{Collins2006IntHaar,Procesi2020noteWg} defined in Eq.~\eqref{eq:defwg}. 
This equality allows us to compute in $\C S_n$ expectation values of the form~\eqref{eq:non-lin_precise} avoiding the exponential growth of the number of parameters with the local dimension $d$. 
Let us focus on the system size $k=n=d=3$. While a desktop computer cannot evaluate Eq.~\eqref{eq:non-lin_precise} in the Hilbert space $(\C^3)^{\ot 9}$, 
the fact that the nine-qutrit state factorizes as $\varrho^{\ot 3}$ makes this computation possible in $(\C S_3)^{\times 3}$ symbolically. For example, the expectation value of $\langle\WW\rangle_{\varrho^{\ot 3}}$ where $\WW=W^{\ot 2}\ot P_{21}$ with $W=P_3-\tfrac{P_{21}}{4}$ given in Eq.~\eqref{eq:WP3-P21}, evaluated on random Werner states is shown in  Fig.~\ref{fig:WernerDetProjections} (see Appendix~\ref{app:WGFormalism} for sampling details).
The results suggest that the detection of random states depends on their component within each irreducible subspace.

\smallskip
{\it Evaluating positive maps.~---}
Using variations of equation~\eqref{eq:non-lin_precise} one can detect nonpositive outcomes of positive maps applied locally. For example one can evaluate the reduction criteria~\cite{HoroRedCrit_1999} via purity conditions~\cite{HoroMulticopyWit2003} with Eqs.~\eqref{eq:Witness1-Pa} and~\eqref{eq:GHZ_W}, and detect nonpositivity of the partial transpose $\Gamma$ of a state $\varrho$ via
\begin{equation}\label{eq:NPTdetection}
    \tr\big ((\phi^+\otimes V)  (\varrho\otimes\sigma)\big ) = \tr\big (\varrho^\Gamma \sigma\big )
\end{equation}
if $\sigma$ lies in a negative eigenspace of $\varrho^{\Gamma}$.
In this sense, our method relates to the existence problem of nondecomposable tensor stable maps, which remain positive under tensor powers~\cite{MHermes_TensorStabMaps2015,Mirte_TensorStabMaps2022}, as follows: there exists a nondecomposable tensor stable map, if and only if, there exists a bipartite witness $W$ such that $W^{\ot n}$ cannot detect any pair of $n$-partite states $\varrho$ and $\sigma$. This is because a nondecomposable map $\Phi_W$ is tensor stable if and only if
\begin{equation}\label{eq:TensorStable}
    {\Phi_W}^{\ot n}(\rho)=\tr_1(W^{\ot n}(\rho^T\ot\one_d))\geq 0
\end{equation}
for any $n$-partite state $\varrho$, where $W$ is a witness for states with positive partial transpose (See Chapter 11 of~\cite{Karol-GeoQstates2006} and references therein); which is true if and only if the expectation value of ${\Phi_W}^{\ot n}(\rho)$ for any other $n$-partite state $\sigma$ is non-negative.

\smallskip
{\it Conclusions.~---}
Our approach shows how entanglement witnesses can be recycled in nonlinear fashion,
thus increasing the range of applicability of a given witness. In particular, this approach is suitable to randomized measurements~\cite{
Huang_ManyPropsFewMeas_2020,
Elben2020,WyderkaRandMeasCorr_2022}, 
trace polynomials being naturally invariant under local unitaries.

Some open questions remain:
How can one tailor nonlinear witnesses to specific states?
The Keyl-Werner theorem~\cite{Keyl-WernerEstimatingSpectra_2001} and subsequent work~\cite{Chris_Spectra_marg_2005} suggest that it is possible to choose witnesses according to the spectra of the reductions of the state in hand.
While Observation~\ref{obs:CriteriaRank} (Appendix~\ref{App:Spectra})
serves as a first step in this direction,
a systematic approach is yet missing.
Lastly, it would be interesting to make use of matrix inequalities involving elementary symmetric polynomials~\cite{merris1997multilinear},
which might be more powerful than the particular case of immanant inequalities.

\begin{acknowledgments}
We thank 
Ray Ganardi,
Otfried G\"uhne,
Pawe\l{} Horodecki,
Barbara Kraus,
Aaron Lauda, 
Anna Sanpera,
and 
Beno{\^{\i}}t Vermersch for fruitful discussions.
AR and FH are supported by the Foundation for
Polish Science through TEAM-NET (POIR.04.04.00-00-17C1/18-00).
\end{acknowledgments}

\bibliography{Bibliography}
\begin{widetext}
\appendix
\setcounter{secnumdepth}{1}

\section{Trace polynomial witnesses from randomized measurements}\label{app:RandMes}
Here we sketch how trace polynomial witnesses introduced in this work by recycling linear witnesses can be evaluated with randomized measurements, using the tools introduced in~\cite{Elben2020,Neven_SymmetryResMomentsPT2021,Elben22Toolbox}.

Let $\{W_i\}_{i=1}^n$ be $n$ witnesses for $k$-qudit mixed states, and let $\WW=\bigotimes_{i=1}^nW_i$ be a nonlinear witness.
The expectation value $\theta(\varrho)=\tr(\WW\varrho^{\ot k})$ can be approximated as follows:
First one applies random local unitaries $U_1\ot\dots\ot U_n$ to $\varrho$ and measures in the computational basis,
obtaining outcomes $q_1...q_n$.
Performing $m$ such measurements with a different choice of random local unitaries,
one obtains $m$ outcomes $q_1^{(r)}...q_n^{(r)}$ with $1\leq r\leq m$.
From this one constructs the classical shadows
\begin{align}
\tilde{\varrho}^{(r)}&=\bigotimes_{i=1}^n\tilde{\varrho}_i^{(r)} \quad\text{with}\label{eq:shadowr}\\
\tilde{\varrho}_i^{(r)}&=(d+1){U_i^{(r)}}^\dag \dyad{q_i^{(r)}}U_i^{(r)} - \one_d\,.\label{eq:shadowrfactors}
\end{align}
Then $\varrho^{(r)}$ has the same expectation value as $\varrho$ over the unitary ensemble and measurements~\cite{Huang_ManyPropsFewMeas_2020}.
By replacing each $j$'th copy of $\varrho$ by an operator $\tilde{\varrho}^{(r_j)}$, the expectation value $\tr(\WW\varrho^{\ot k})$ can be approximated as
\begin{equation}\label{eq:approxRMvalueShadows}
 \tilde{\theta}(\varrho)=\frac{1}{k!}{\binom{m}{k}}^{-1}\sum_{r_1\neq r_2\dots\neq r_k} \tr\bigg (\WW\varrho^{(r_1)}\ot\cdots\ot\varrho^{(r_k)}\bigg )
\end{equation}
with an error which decays as $1/\sqrt{m}$ for large $m$ (see~\cite{Elben2020,Neven_SymmetryResMomentsPT2021} and references therein). 

Expression~\eqref{eq:approxRMvalueShadows} can then be evaluated without computing the tensor product of classical shadows,
if $\WW=W_1\ot\dots\ot W_n$ is a trace polynomial witness with $W_i=\sum_{\sigma\in S_k}a_\sigma\eta_d(\sigma)$.
To see this, note that in this case Eq.~\eqref{eq:approxRMvalueShadows} decomposes as $\tilde{\theta}(\varrho)=\sum_{s}c_s\tilde{\theta}_s(\varrho)$ in terms of the form
\begin{equation}\label{eq:termTP}
 \tilde{\theta}_s(\varrho)=\tr\bigg (\bigotimes_{i=1}^n\eta_d(\sigma_i)\cdot\varrho^{(r_1)}\ot\dots\ot\varrho^{(r_k)}\bigg ),
\end{equation}
where $s=(\sigma_1,...,\sigma_n)$.
By recalling that each operator $\eta_d(\sigma_i)$ with $\sigma_i\in S_k$ permutes $k$ tensor factors in the party $i$ and that each shadow $\varrho^{(r_j)}$ is a product state in Eq.~\eqref{eq:shadowr}, one can now argue analogously as in Ref.~\cite{Elben2020}:
for any permutation $\sigma=(\alpha_1\dots\alpha_l)\dots(\xi_1\dots\xi_t)\in\C S_k$
and square matrices $X_1,\dots, X_k$, of equal size it holds that
\begin{equation}\label{eq:tprodtocontraction}
 \tr\big (\eta_d(\sigma) X_1\ot\dots\ot X_k\big )=\tr( X_{\alpha_1}\dots X_{\alpha_l} )\dots\tr(X_{\xi_1}\dots X_{\xi_{t}})\,.
\end{equation}
By assigning $X_j=\varrho_i^{(r_j)}$, the quantity $\tilde{\theta}(\varrho)$ can be evaluated as a matrix contraction of the tensor factors in Eq.~\eqref{eq:shadowrfactors} at each local party $i$, instead of computing a tensor product of $k$ shadows.

\section{Entanglement concentration with pure states of full Schmidt rank} \label{app:LinearFromTelep}
One can understand the entanglement concentration scheme as a variant of the entanglement swapping protocol. To see this, suppose Alice and Bob share two copies of a full Schmidt rank bipartite pure state $\ket{\psi}$. Since $\ket{\psi}$ is pure, it can be written as
\begin{equation}
\ket{\psi}=\one\ot S\ket{\phi^+} = S^T\ot\one\ket{\phi^+}\,
\end{equation}
where $S$ is a $d\times d$ complex matrix with $\tr(S^\dag S)=1$. By assumption $S$ is square and full rank, and therefore it has inverse $S^{-1}$. 

In this case, Bob can project onto $\dyad{\varphi}$ with $\ket{\varphi}=\one\ot {S^\dag}^{-1}\ket{\phi^+}$, thus obtaining the state $\xi$ in Eq.~\eqref{eq:ConcAliceSide} with certain probability $p$. It will be useful to consider the outcome state up to normalization, $\zeta=p\xi$, which reads
\begin{equation}
\zeta=\tr_{BB'}((\one_{AA'}\ot\dyad{\varphi}_{BB'})(\dyad{\psi}_{AB}\ot\dyad{\psi}_{A'B'}))\,.
\end{equation}
To compute $\zeta$ it will be convenient to use the row-column notation sketched in Fig.~\ref{fig:AliceBob2copies}, in which local parties are displayed in rows and copies are displayed in columns. The first factor inside the trace contains the measurement operators each party performs, and the second factor contains the copies of the state. This notation is further explained in Fig.~\ref{fig:RowColNotation}.
Then it becomes clear that $\zeta$ reads
\begin{align}
\zeta&=\frac{1}{d^3}\sum_{i,j,k,l,r,s=0}^{d-1}\tr_{BB'}\left(
\begin{pmatrix}
\one & \one \\
\ket{i}\bra{j} & {S^\dag}^{-1}\ket{i}\bra{j}S^{-1}
\end{pmatrix}
\begin{pmatrix}
S^T\ket{k}\bra{l}S^* & \ket{r}\bra{s} \\
\ket{k}\bra{l} & S\ket{r}\bra{s}S^\dag
\end{pmatrix}
\right)\\
&=\frac{1}{d^3}\sum_{i,j,k,l,r,s=0}^{d-1}
\begin{matrix}
S^T\ket{k}\bra{l}S^* & \ket{r}\bra{s} \\
\bra{j}k\rangle\bra{l}i\rangle & \bra{j}r\rangle\bra{s}i\rangle
\end{matrix}\\
&=\frac{1}{d^3}\sum_{r,s=0}^{d-1}S^T\ket{r}\bra{s}S^* \ot \ket{r}\bra{s}=\frac{1}{d^2}\dyad{\psi}\,,
\end{align}
which means that the final state on Alice's side is $\ket{\psi}$ with probability $p=1/d^2$ (given that Bob's outcome is $\ket{\varphi}$). This procedure is similar to the entanglement swapping protocol~\cite{Benett_Teleport1993,Jian-Wei_ExpSwapping1998}, except for the fact that we consider any pure quantum state with full Schmidt rank instead of restricting ourselves to maximally entangled states. When Alice obtains $\ket{\psi}$, she can detect its entanglement linearly with a witness $W_\psi$ fulfilling $\bra{\psi}W_{\psi}\ket{\psi}<0$. 

This can be summarized as follows: if $\ket{\psi}$ is full Schmidt rank and it can be detected linearly by a witness $W_\psi$, then it can also be detected using entanglement concentration with $\WW=\dyad{\varphi}\ot W_{\psi}$.

\section{Nonlinear entanglement dection of noisy GHZ states}\label{app:GHZdetection}
Consider the $n$-qudit GHZ state affected by white noise,
\begin{equation}
    \varrho = p\dyad{\GHZ} + \frac{1-p}{d^n}\one\,.
\end{equation}
The following result allows us to evaluate a nonlinear witness on $\varrho$ for arbitrary number of subsystems, copies and local dimensions. Some examples are illustrated in Figure~\ref{fig:DetIso}.
\begin{lemma}\label{lem:NonlinWitGHZ}
Let $\WW=(\one-k!P_{1^k})\ot P_{1^k}^{\ot n-1}$. The expectation value of $\WW$ with respect to $k$ copies of noisy $n$-partite GHZ state with local dimension $d$ reads
\begin{equation}\label{eq:appGHZdetGenN}
\tr\left(\WW\varrho^{\ot k}\right)=
\frac{
(d)_{(k)}}{k!^{n-1}}
\sum_{r=0}^k \binom{k}{r} \left(\frac{1-p}{d^n}\right)^r\left(\frac{p}{d}\right)^{k-r}
{(d+1-k)^{(r)}}^{n-2}\Delta_r
\end{equation}
with
\begin{equation}
    \Delta_r=d^r-\varsigma_n(k-r)!
(d+1-k)^{(r)}\,,
\end{equation}
where $(a)_{(b)} = a(a-1)\cdots(a+1-b)$ is the falling factorial and $(a)^{(b)} = a(a+1)\dots(a+b-1)$ is the rising factorial, and we define $\varsigma_n=(1+(-1)^n)/2$. 
\end{lemma}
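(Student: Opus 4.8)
The plan is to expand the noisy state as $\varrho = pG + q\one$ with $G=\dyad{\GHZ}$ and $q=(1-p)/d^n$, and to write $G=\tilde G/d$ where $\tilde G=\sum_{a,b}(\ket a\bra b)^{\ot n}$. Expanding $\varrho^{\ot k}$ produces $2^k$ terms labelled by which copies carry $G$ and which carry $\one$. Because each factor $O_1=\one-k!P_{1^k}$ and $O_i=P_{1^k}$ ($i\ge2$) commutes with a simultaneous permutation of the $k$ copies across all $n$ parties (the antisymmetrizer obeys $\eta_d(\tau)P_{1^k}=\sign(\tau)P_{1^k}=P_{1^k}\eta_d(\tau)$, and $\one$ is central), all terms with a fixed number $r$ of identity-copies contribute the same trace. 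Collecting them gives $\tr(\WW\varrho^{\ot k})=\sum_{r=0}^k\binom{k}{r}q^r(p/d)^{k-r}T_r$ with $T_r=\tr(\WW\,\Xi_r)$ and $\Xi_r=\tilde G^{\ot(k-r)}\ot\one^{\ot r}$; the factor $p/d$ rather than $p$ records the $1/d$ normalization of the GHZ projector.

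Next I would regroup $\Xi_r$ by party. Writing $\vec a,\vec b\in[d]^{\,k-r}$ for the ket/bra labels carried by the $k-r$ GHZ-copies (index set $J_G$, identity-copies $J_I$), the state factorizes as $\Xi_r=\sum_{\vec a,\vec b}\bigotimes_{i=1}^n(\ket{\vec a}\bra{\vec b}_{J_G}\ot\one_{J_I})$, so $T_r=\sum_{\vec a,\vec b}\prod_i f_i$ with single-party factors $f_i=\tr\big(O_i(\ket{\vec a}\bra{\vec b}_{J_G}\ot\one_{J_I})\big)$. The key computation is the partial trace of the antisymmetrizer over the $r$ identity copies, $\tr_{J_I}P_{1^k}=\frac{(d+1-k)^{(r)}}{(k)_{(r)}}P_{1^{k-r}}$, obtained by iterating $\tr_k P_{1^k}=\frac{d-k+1}{k}P_{1^{k-1}}$. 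Using $(k)_{(r)}(k-r)!=k!$ and $\bra{\vec b}P_{1^{k-r}}\ket{\vec a}=\frac{1}{(k-r)!}D$ with $D=\det(\delta_{b_j,a_{j'}})_{j,j'\in J_G}$, this yields $f_i=\frac{(d+1-k)^{(r)}}{k!}D$ for $i\ge2$ and $f_1=d^r\,\delta_{\vec a,\vec b}-(d+1-k)^{(r)}D$, where the first piece is $\braket{\vec b}{\vec a}\tr(\one_{J_I})=d^r\delta_{\vec a,\vec b}$.

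It then remains to carry out the sums over $\vec a,\vec b$. Since $D$ vanishes unless both $\vec a$ and $\vec b$ have pairwise distinct entries, one obtains $\sum_{\vec a,\vec b}\delta_{\vec a,\vec b}D^{n-1}=(d)_{(k-r)}$ and $\sum_{\vec a,\vec b}D^{n}=(d)_{(k-r)}\sum_{\sigma\in S_{k-r}}(\sign\sigma)^n=(d)_{(k-r)}\,\varsigma_n(k-r)!$, the parity sum over signs being exactly what produces $\varsigma_n=(1+(-1)^n)/2$. Substituting and recombining prefactors via the factorization $(d)_{(k)}=(d)_{(k-r)}(d+1-k)^{(r)}$ gives $T_r=\frac{(d)_{(k)}}{k!^{\,n-1}}\big[(d+1-k)^{(r)}\big]^{n-2}\Delta_r$ with $\Delta_r=d^r-\varsigma_n(k-r)!(d+1-k)^{(r)}$, which is the claimed expression once inserted into the $r$-sum.

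I expect the main obstacle to be the partial-trace step for the antisymmetrizer together with the bookkeeping of the distinctness constraints: recognizing that $D$ forces all GHZ labels to be distinct, so that the residual permutation sum collapses to $\sum_\sigma(\sign\sigma)^n$, is precisely what makes the falling factorial $(d)_{(k-r)}$ and the parity factor $\varsigma_n$ emerge. A secondary point is the edge behavior for odd $n$ when $k-r\le1$ (no odd permutations exist), which is harmless in the intended regime of an even number of parties where $\varsigma_n=1$ and $\sum_\sigma(\sign\sigma)^n=(k-r)!$ for every $r$.
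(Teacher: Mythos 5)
Your proof is correct and follows essentially the same route as the paper's: a binomial expansion over the copies justified by permutation invariance of $\WW$, the partial-trace formula $\tr_{1\dots r}(P_{1^k})=\tfrac{(d+1-k)^{(r)}(k-r)!}{k!}P_{1^{k-r}}$, and reduction of the GHZ label sums to distinct-entry counting plus the parity sum $\sum_{\sigma}(\sign\sigma)^n$, which is exactly how $(d)_{(k-r)}$ and $\varsigma_n$ arise in the paper. The only differences are organizational (the paper splits the expectation value into $T_1-k!\,T_2$ and runs the computation twice, while you keep $\one-k!P_{1^k}$ as a single per-party factor $f_1$), and your flagged edge case (odd $n$ with $k-r\le 1$) is a point the paper's own proof glosses over in the same way, harmless in the intended even-$n$ regime.
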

\begin{proof}
First note that we can split
\begin{equation}
    \tr\left(\WW\varrho^{\ot k}\right) = T_1-k!T_2
\end{equation}
with
\begin{equation}
    T_1 = \tr\big (\one\ot P_{1^k}^{\ot n-1}\varrho^{\ot k}\big )\,,
\quad\quad
    T_2 = \tr\big (P_{1^k}^{\ot n}\varrho^{\ot k}\big )\,.
\end{equation}
Let us first compute $T_1$. To this end, it will be convenient to use the notation of Fig.~\ref{fig:AliceBob2copies} for tensor products mixing copies and subsystems.
We arrange operators at different positions such that elements on the same row act on different {\it copies} of a local party, e.g. $A_1$ and $A_2$; 
and operators in same column act on different {\it subsystems} of the same copy, e.g. $A_1$ and $B_1$. 
This is illustrated in~\ref{fig:RowColNotation}.
\begin{figure}[tbp]
    \centering
    \includegraphics[scale=0.124]{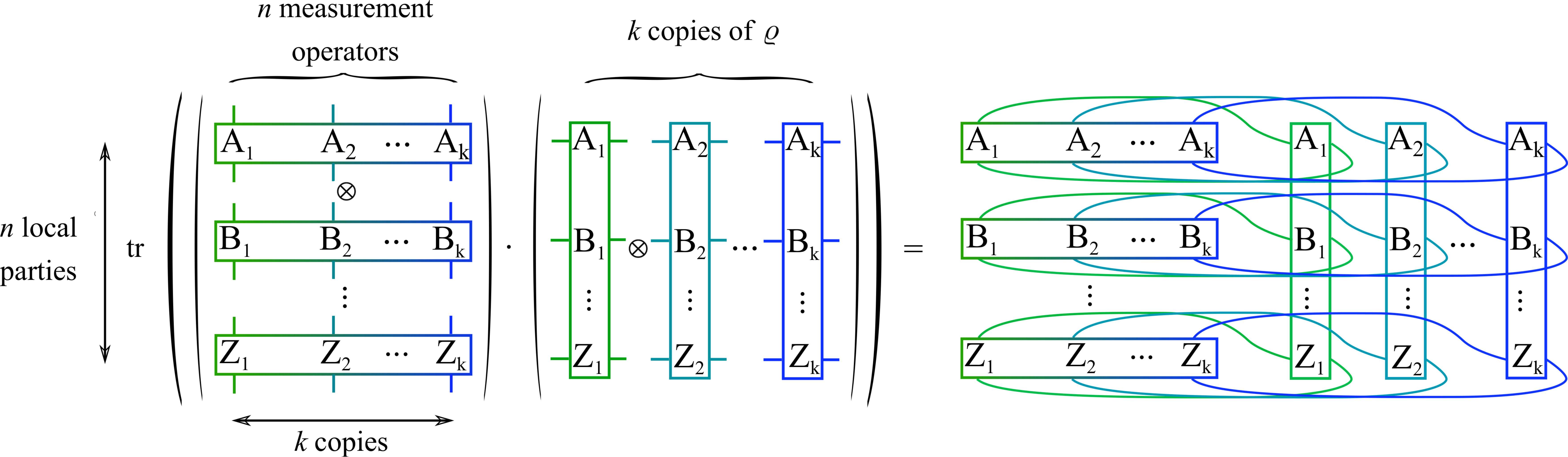}
    \caption{{\bf Graphical representation of the row-column notation for tensor products.} The full space is composed of $n$ local parties $A$, ..., $Z$, each of which is copied $k$ times. We denote the $i$'th copy of each party as $A_i$, ..., $Z_i$. In our expressions we compute the expectation values of $n$ operators (witnesses or projectors) acting across $k$ copies of each local subsystem, e.g. $A_1\otimes A_2\dots A_k$, with respect to $k$ operators ($k$ copies of a quantum state) acting on a single copy of all $n$ parties, e.g. $A_1\otimes B_1\cdots Z_1$. By assigning rows to local parties and columns to copies, this is represented by placing the operators in rows and columns correspondingly (left hand side). This representation can be understood as a tensor network notation of the expectation value to be evaluated by index contraction (right hand side).}
    \label{fig:RowColNotation}
\end{figure}
For example, 
\begin{equation}
\tr\left(
\begin{pmatrix}
X \\
Y 
\end{pmatrix}
\cdot
\begin{pmatrix}
M^{\ot 2}  \\
N^{\ot 2} 
\end{pmatrix}
\right)
= 
\tr\left(
\begin{pmatrix}
X \\
Y 
\end{pmatrix}
\cdot
\begin{pmatrix}
M  \\
N 
\end{pmatrix}^{\ot 2}
\right)
:=
\tr((X_{A_1A_2}\otimes Y_{B_1B_2})\cdot(M_{A_1}\ot M_{A_2}\ot N_{B_1}\ot N_{B_2})) \,.
\end{equation}
With this notation we write $T_1$ as
\begin{align}
T_1 = \tr \Vast (
\underbrace{\left(\begin{matrix}
\one_{d}^{\ot k} \\
P_{1^k} \\
\vdots \\
P_{1^k}
\end{matrix}\right)}_{(1)}
\cdot
{\underbrace{\left(
\frac{(1-p)}{d^n}
\begin{matrix}
\one_d \\
\one_d \\
\vdots \\
\one_d
\end{matrix}
+\frac{p}{d}\sum_{t,s=0}^{d-1}
\begin{matrix}
\ket{t}\bra{s} \\
\ket{t}\bra{s} \\
\vdots \\
\ket{t}\bra{s}
\end{matrix}
\right)}_{(2)}}^{\ot k}
\Vast )
\,. \\
\end{align}
Note when expanding above expression, 
each term from (2) contains 
$r$ copies of identity with a factor of $((1-p)/d^n)^r$,
and $k-r$ copies of the (unormalized) GHZ state with a factor 
of $(p/d)^{k-r}$.
Since both $\one_d^{\ot k} \ot P_{1^k}^{\ot n-1}$ in (1) are symmetric under permutation of the copies 
(i.e. permutations of the \emph{columns} in Fig.~\ref{fig:RowColNotation}), 
we obtain the standard binomial expression,
\begin{equation}\label{eq:row}
T_1 = \sum_{r=0}^k 
\binom{k}{r}\left(\frac{1-p}{d^n}\right)^r\left(\frac{p}{d}\right)^{k-r}\sum_{\Vec{t},\Vec{s}\in\mathbb{Z}_d^{k-r}} \tr \left(
\begin{matrix}
\one_{d}^{\ot k} \cdot \one_d^{\ot r}\ot\ket{\Vec{t}}\bra{\Vec{s}}\\
P_{1^k} \cdot \one_d^{\ot r}\ot\ket{\Vec{t}}\bra{\Vec{s}} \\
\vdots \\
P_{1^k} \cdot \one_d^{\ot r}\ot\ket{\Vec{t}}\bra{\Vec{s}}
\end{matrix}
\right)\,.
\end{equation}
Define
\begin{equation}
C_r=d^r\binom{k}{r}\left(\frac{1-p}{d^n}\right)^r\left(\frac{p}{d}\right)^{k-r}
\end{equation}
and note that the first row in Eq.~\eqref{eq:row} reads
\begin{equation}
\tr\Big (\one_{d}^{\ot k} \cdot \big ( \one_d^{\ot r}\ot\ket{\Vec{t}}\bra{\Vec{s}}\big )\Big ) =
\begin{cases}
d^r\quad\text{if}\,\Vec{t}=\Vec{s} \\
0\quad\,\,\,\text{if}\,\Vec{t}\neq\Vec{s}
\end{cases}\,.
\end{equation}
Then one has
\begin{align}
T_1 = 
\sum_{r=0}^k C_r \sum_{\Vec{t}\in\mathbb{Z}_d^{k-r}} \tr\Big( P_{1^k} \cdot (\one_d^{\ot r}\ot\ket{\Vec{t}}\bra{\Vec{t}}) \Big)^{n-1} \nonumber\\
= 
\sum_{r=0}^k C_r \sum_{\Vec{t}\in\mathbb{Z}_d^{k-r}} \tr\Big( \tr_{1\dots r}(P_{1^k}) 
\cdot 
\ket{\Vec{t}}\bra{\Vec{t}} \Big)^{n-1}\,.
\end{align}
The partial trace of $P_{1^k}$ with respect to any $r$ subsystems 
is known to be~\cite[Proposition 4]{Audenaert_Digest2006}
\begin{equation}
    \tr_{1...r}\big (P_{1^k}\big ) = \prod_{j=0}^{r-1}(d+1-k+j)\frac{(k-r)!}{k!}P_{1^{k-r}}\,,
\end{equation}
where $P_{1^{k-r}}$ is the antisymmetric projector acting on $k-r$ subsystems. Therefore we have
\begin{equation}
T_1 = \sum_{r=0}^k C_r \prod_{j=0}^{r-1}(d+1-k+j)^{n-1}
\left(\frac{(k-r)!}{k!}\right)^{n-1} 
\sum_{\Vec{t}\in\mathbb{Z}_d^{k-r}}  \tr\left(P_{1^{k-r}}\dyad{\Vec{t}}\right)^{n-1}\,.
\end{equation}
We will calculate separately the factor $\sum_{\Vec{t}\in\mathbb{Z}_d^{k-r}}  \tr\left(P_{1^{k-r}}\dyad{\Vec{t}}\right)^{n-1}$. Due to the properties of the antisymmetric subspace, one has $P_{1^{k-r}}\ket{\Vec{t}}=0$ unless all components of $\Vec{t} = (t_1, t_2, \dots, t_{k-r})$ are different. Therefore, $P_{1^{k-r}}\ket{\Vec{t}}$ is nonzero for
\begin{equation}
    d(d-1)(d-2)\cdots(d-(k-r+1)) = \prod_{j=1}^{k-r}(d+1-j)
\end{equation}
choices of $\Vec{t}$. In case it is non zero, we have
\begin{equation}
\bra{\Vec{t}}P_{1^{k-r}}\ket{\Vec{t}} = \frac{1}{(k-r)!}\sum_{\sigma\in S_{k-r}}(-1)^{\sign(\sigma)}
\bra{\Vec{t}}\sigma\ket{\Vec{t}}
= \frac{1}{(k-r)!}\,,
\end{equation}
where $\bra{\Vec{t}}\sigma\ket{\Vec{t}}$ vanishes if $\sigma\neq\id$. Therefore, we have
\begin{equation}
\sum_{\Vec{t}\in\mathbb{Z}_d^{k-r}}\tr\left(P_{1^{k-r}}\dyad{\Vec{t}}\right)^{n-1} = \frac{\prod_{j=1}^{k-r}(d+1-j)}{(k-r)!^{n-1}}\,.
\end{equation}
With this we are left with
\begin{equation}
T_1 = \sum_{r=0}^k \binom{k}{r} \left(\frac{1-p}{d^n}\right)^r\left(\frac{p}{d}\right)^{k-r}d^r\prod_{i=0}^{r-1}(d+1-k+i)^{n-1}\frac{(k-r)!^{n-1}}{k!^{n-1}} \frac{\prod_{j=1}^{k-r}(d+1-j)}{(k-r)!^{n-1}}\,.
\end{equation}
By shifting the indices, we have
\begin{equation}
    \prod_{i=0}^{r-1}(d+1-k+i)\prod_{j=1}^{k-r}(d+1-j) = \prod_{j=1}^{k}(d+1-j)\,
\end{equation}
and therefore $T_1$ reads
\begin{equation}
\begin{aligned}
T_1 &=  \frac{\prod_{i=1}^{k}(d+1-i)}{k!^{n-1}} 
\sum_{r=0}^k d^r\binom{k}{r} \left(\frac{1-p}{d^n}\right)^r\left(\frac{p}{d}\right)^{k-r}
\prod_{j=0}^{r-1}(d+1-k+j)^{n-2} \\
&= 
\frac{(d)_{(k)}}{k!^{n-1}} 
\sum_{r=0}^k d^r\binom{k}{r} \left(\frac{1-p}{d^n}\right)^r\left(\frac{p}{d}\right)^{k-r}{(d+1-k)^{(r)}}^{n-2}\,.
\end{aligned}
\end{equation}
Now we want to compute $T_2$. Following a similar procedure as the one described above, we have
\begin{equation}
\begin{aligned}
T_2 &= \tr \left(
\begin{matrix}
P_{1^k} \\
\vdots \\
P_{1^k}
\end{matrix}
\left(
\frac{(1-p)}{d^n}
\begin{matrix}
\one_d \\
\vdots \\
\one_d
\end{matrix}
+\frac{p}{d}\sum_{t,s=0}^{d-1}
\begin{matrix}
\ket{t}\bra{s} \\
\vdots \\
\ket{t}\bra{s}
\end{matrix}
\right)^{\ot k}
\right) \\
&= \sum_{r=0}^k \frac{C_r}{d^r}\sum_{\Vec{t},\Vec{s}\in\mathbb{Z}_d^{k-r}} \tr \left(
\begin{matrix}
P_{1^k} \cdot \one_d^{\ot r}\ot\ket{\Vec{t}}\bra{\Vec{s}} \\
\vdots \\
P_{1^k} \cdot \one_d^{\ot r}\ot\ket{\Vec{t}}\bra{\Vec{s}}
\end{matrix}
\right) \\
&= \sum_{r=0}^k \frac{C_r}{d^r} \sum_{\Vec{t},\Vec{s}\in\mathbb{Z}_d^{k-r}} \tr\left( P_{1^k} \cdot \one_d^{\ot r}\ot\ket{\Vec{t}}\bra{\Vec{s}} \right)\,.
\end{aligned}
\end{equation}
By noting that
\begin{equation}
    \tr\left( P_{1^k} \cdot \one_d^{\ot r}\ot\ket{\Vec{t}}\bra{\Vec{s}} \right) = \tr(\tr_{1...r}(P_{1^k})\ket{\Vec{t}}\bra{\Vec{s}})\,,
\end{equation}
we have
\begin{equation}\label{eq:AppT2trPn}
T_2 = \sum_{r=0}^k \frac{C_r}{d^r} \prod_{j=0}^{r-1}(d+1-k+j)^n\frac{(k-r)!^n}{k!^n} \sum_{\Vec{t},\Vec{s}\in\mathbb{Z}_d^{k-r}}  \tr\left(P_{1^{k-r}}\ket{\Vec{t}}\bra{\Vec{s}}\right)^n\,.
\end{equation}
For each pair of vectors $\Vec{t}$ and $\Vec{s}$, the factor $\tr\left(P_{1^{k-r}}\ket{\Vec{t}}\bra{\Vec{s}}\right)$ reads
\begin{align}\label{eq:ApptrPts}
\tr\left(P_{1^{k-r}}\ket{\Vec{t}}\bra{\Vec{s}}\right) 
&= \frac{1}{(k-r)!}\sum_{\sigma\in S_{k-r}}(-1)^{\sign(\sigma)}\bra{\vec{s}}\sigma\ket{\vec{t}}.
\end{align}
where $\sigma$ permutes the indices of the vectors 
$\Vec{s} \in \mathds{Z}_d^{k-r}$.
Recall that the components of $\Vec{s}$ are pairwise different, and the same holds for $\Vec{t}$. 
Consequently, Eq.~\eqref{eq:ApptrPts} is equal to $(-1)^{\sign(\sigma)}(k-r)!^{-1}$, 
where $\sigma \in S_{k-r}$ is the unique permutation such that $\sigma\ket{\Vec{t}}=\ket{\Vec{s}}$.

Thus to evaluate Eq.~\eqref{eq:AppT2trPn}, we need to sum Eq.~\eqref{eq:ApptrPts} over $\{\Vec{t}\in\mathbb{Z}_d^{k-r}\}$ with pairwise different components and over all the permutations of each vector $\Vec{t}$ of length $k-r$. Those two sets have cardinalities $\prod_{j=1}^{k-r}(d+1-j)$ and $(k-r)!$. Furthermore, note that $\sum_{\sigma\in S_{k-r}}\left((-1)^{\sign(\sigma)}\right)^n$ equals $(k-r)!$ if $n$ is even, and 0 if $n$ is odd. Those observations together lead to
\begin{equation}
\sum_{\Vec{t},\Vec{s}\in S_{k-r}}\tr\left(P_{1^{k-r}}\ket{\Vec{t}}\bra{\Vec{s}}\right)^n = \frac{\sum_{\Vec{t}\in\mathbb{Z}_d^{k-r},\sigma\in S_{k-r}}\left((-1)^{\sign(\sigma)}\right)^n}{(k-r)!^n}
= \varsigma_n \frac{\prod_{j=1}^{k-r}(d+1-j)}{(k-r)!^{n-1}}\,,
\end{equation}
where the function $\varsigma_n=(1+(-1)^n)/2$ vanishes if $n$ is odd. This allows us to compute $T_2$,
\begin{equation}
\begin{aligned}
T_2 &= \varsigma_n \sum_{r=0}^k \binom{k}{r} \left(\frac{1-p}{d^n}\right)^r\left(\frac{p}{d}\right)^{k-r}\prod_{i=0}^{r-1}(d+1-k+i)^n \prod_{j=1}^{k-r}(d+1-j) \frac{(k-r)!}{k!^n} \\
&= \varsigma_n\prod_{i=1}^{k}(d+1-i) \sum_{r=0}^k \binom{k}{r} \left(\frac{1-p}{d^n}\right)^r\left(\frac{p}{d}\right)^{k-r} \frac{(k-r)!}{k!^n}\prod_{j=0}^{r-1}(d+1-k+j)^{n-1} \\
&= \varsigma_n (d)_{(k)} \sum_{r=0}^k \binom{k}{r} \left(\frac{1-p}{d^n}\right)^r\left(\frac{p}{d}\right)^{k-r} \frac{(k-r)!}{k!^n}
{(d+1-k)^{(r)}}^{n-1}
\end{aligned}
\end{equation}
\end{proof}
We compare this result to the P3-PPT method proposed in~\cite{Neven_SymmetryResMomentsPT2021}, which consists of evaluating the LU-invariant expression
\begin{equation}
    \tau(\varrho)=\tr({\varrho^{T_A}}^3)-\tr({\varrho^{T_A}}^2)^2
\end{equation}
for each set $A \subset \{1,\dots, n\} $ of local parties.
By the symmetry of the GHZ and maximally mixed states, one observes that $\tau(\varrho)$ does not depend on the choice of the subsystem $A$.
One can evaluate the P3-PPT criterion shown in Fig.~\ref{fig:DetIso} by evaluating the second and third moments of the partial transpose,
\begin{equation}
\tr({\varrho^{T_A}}^2)=p^2+\frac{2p(1-p)}{d^n}+\frac{(1-p)^2}{d^n}
\end{equation}
and
\begin{equation}
\tr({\varrho^{T_A}}^3)=\frac{p^3}{d^2}+\frac{2p^2(1-p)}{d^n}+\frac{p(1-p)^2}{d^{2n}}+\frac{1-p}{d^n}\tr({\varrho^{T_A}}^2)\,.
\end{equation}

\section{Entanglement witnesses from immanant inequalities}\label{app:WitsFromImIneq}
We shortly sketch the connections established in Refs.~\cite{MaassenSlides,Huber2020PositiveMA,Huber2021MatrixFO}. Denote by $S_n$ the symmetric group permuting $n$ elements. Irreducible representations of $S_n$ are labeled by partitions $\lambda=(\lambda_1,...,\lambda_n)$ with integers $\lambda_1\geq\dots\geq\lambda_n$ satisfying $\lambda_1+...+\lambda_n=n$. The character associated to $\lambda$ is then the trace of the representation $\lambda$, denoted $\chi_\lambda(\sigma)=\tr(\lambda(\sigma))$.

Given an $n \times n$ matrix $A$,
one can then define the immanant associated to $\lambda$ as
\begin{equation}
    \imm_\lambda(A) = \sum_{\sigma \in S_n} \chi_\lambda(\sigma) \prod_{i=1}^n A_{i \sigma(i)}\,.
\end{equation}
For example, the symmetric and antisymmetric representations are labelled by $(n,0,...,0)=(n)$ and $(1,...,1)=(1^n)$, respectively. Their associated characters are $\chi_{(n)}=1$ and $\chi_{(1^n)}(\sigma)=\sign(\sigma)$. The corresponding matrix functions are the permanent and the determinant, 
\begin{align}
 \per(A)&= \imm_{(n)}(A) &
 \det(A)&= \imm_{(1^n)}(A)\,.
\end{align}
Now let $\eta_d(\sigma)$ be the representation of the symmetric group on $\cdn$ permuting its tensor factors as
\begin{equation}
    \eta_d(\sigma) \vv
    = 
    \ket{v_{\sigma^{-1}(1)}} 
    \ot \dots \ot 
    \ket{v_{\sigma^{-1}(n)}}\,.
\end{equation}
Maassen and K\"ummerer \cite{MaassenSlides} established the following relation between inequalities for immanants and entanglement witnesses.
\begin{proposition}{{\em \cite{MaassenSlides}}}\label{prop:MaassenWitness}
If   for all $A \geq 0$ it holds that
\begin{equation}\label{eq:DifferenceImmanants}
 a \imm_\lambda(A) - b \imm_\mu(A) \geq 0\,,
\end{equation}
then the operator
\begin{equation}
W = \frac{a}{\chi_\lambda(\id)}P_\lambda-\frac{b}{\chi_\mu(\id)}P_\mu
%\sum_{\sigma \in S_n} \Big( a \frac{\chi_\lambda(\sigma)}{\chi_\lambda(\id)} - b \frac{\chi_\mu(\sigma)}{\chi_\mu(\id)} \Big) \eta_d(\sigma)
\end{equation}
has non-negative expectation value on the set of separable states, where $P_\lambda$ ($P_\mu$) is the projector onto the subspace corresponding to irreducible representation $\lambda$ ($\mu$).
\end{proposition}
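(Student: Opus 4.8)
The plan is to reduce the claim to pure product states and then translate the expectation value of each Young projector into an immanant of a Gram matrix, at which point the hypothesis applies verbatim. First I would invoke multilinearity: since $\tr(W\varrho_{\sep})$ is linear in the state and $\varrho_{\sep}=\sum_i p_i\,\varrho_i^{(1)}\ot\dots\ot\varrho_i^{(n)}$ is a convex combination of product states, while each factor $\varrho_i^{(j)}\geq 0$ further decomposes into a non-negative sum of rank-one projectors, it suffices to establish $\langle\psi|W|\psi\rangle\geq 0$ for an arbitrary product vector $|\psi\rangle=\vv$ with $|v_j\rangle\in\C^d$.

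The key step is an identity linking projectors to immanants. I would write the isotypic projector in the group algebra as $P_\lambda=\frac{\chi_\lambda(\id)}{n!}\sum_{\sigma\in S_n}\chi_\lambda(\sigma)\,\eta_d(\sigma)$, the standard central projector onto the $\lambda$-isotypic component (using that the characters of $S_n$ are real). Evaluating a single permutation on a product vector gives $\langle\psi|\eta_d(\sigma)|\psi\rangle=\prod_{i=1}^n\langle v_i|v_{\sigma^{-1}(i)}\rangle=\prod_{i=1}^n A_{i,\sigma^{-1}(i)}$, where $A$ is the Gram matrix $A_{ij}=\langle v_i|v_j\rangle$. Summing over $\sigma$ and reindexing $\sigma\mapsto\sigma^{-1}$ (which leaves $\chi_\lambda$ invariant) assembles the sum into $\langle\psi|P_\lambda|\psi\rangle=\frac{\chi_\lambda(\id)}{n!}\imm_\lambda(A)$, and likewise for $\mu$.

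Combining these, the normalization prefactors $\chi_\lambda(\id)$ and $\chi_\mu(\id)$ cancel against those in $W$, and I would obtain $\langle\psi|W|\psi\rangle=\frac{1}{n!}\big(a\,\imm_\lambda(A)-b\,\imm_\mu(A)\big)$. Since $A$ is a Gram matrix it is automatically positive semidefinite, so the hypothesis \eqref{eq:DifferenceImmanants} applies with this particular $A$ and the bracket is non-negative. This settles the pure-product case, and the reduction of the first paragraph then propagates non-negativity to every separable state.

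The main obstacle is not a single computation but the bookkeeping in the central step: fixing the correct normalization $\chi_\lambda(\id)/n!$ of $P_\lambda$ and carefully tracking the inverse in $\sigma^{-1}(i)$, so that the sum reconstitutes exactly $\imm_\lambda(A)$ rather than a transposed or rescaled variant. The conceptual crux is recognizing that the matrix $A$ arising here is precisely a Gram matrix and hence positive semidefinite; this is what makes the ``for all $A\geq 0$'' hypothesis exactly the right condition, since Gram matrices are automatically positive semidefinite and, conversely, every positive semidefinite matrix is a Gram matrix, so the admissible matrices coincide with those produced by product vectors.
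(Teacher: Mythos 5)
Your proof is correct and takes essentially the same route as the paper's: both reduce to pure product vectors, recognize $A_{ij}=\braket{v_i}{v_j}$ as a positive semidefinite Gram matrix, and use the character expansion of $P_\lambda$ in permutation operators $\eta_d(\sigma)$ to identify $\bra{\psi}W\ket{\psi}=\tfrac{1}{n!}\bigl(a\imm_\lambda(A)-b\imm_\mu(A)\bigr)\geq 0$. The only difference is presentational: the paper runs the same computation from the matrix function toward the operator and leaves the normalization bookkeeping implicit, whereas you run it from the operator toward the matrix function and track the $\chi_\lambda(\id)/n!$ factors explicitly.
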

\begin{proof}
Let $f(A)$ be a linear combination of permutations of entries of $A$ which is non-negative for $A\geq 0$, and consider $A$ as a Gram matrix of
vectors $\ket{v_1}, \dots, \ket{v_n}$.
Then 
\begin{equation}
\label{eq:wideeq}
\begin{split}
0\leq f(A) &= 
\sum_{\sigma\in S_n} c_\sigma \prod_{i=1}^n A_{i \sigma^{-1}(i)}  \\
&= 
\sum_{\sigma\in S_n} c_\sigma\prod_{i=1}^n \bra{v_i} v_{\sigma^{-1}(i)}\rangle  \\
&= 
\sum_{\sigma\in S_n} c_\sigma \tr \big (
\eta_d(\sigma)\bigotimes_{i=1}^n\ket{v_i}\bra{v_i}
\big )\,.
\end{split} 
\end{equation}
Proposition~\ref{prop:MaassenWitness} follows by assigning $f(A)=a \imm_\lambda(A) - b \imm_\mu(A)$  and considering the convex combination $\varrho=\sum_jp_j\bigotimes_{i=1}^n\ket{v_i^{(j)}}\bra{v_i^{(j)}}$, which can be seen as an unnormalized separable state. 
\end{proof}
For example, it holds that the permanent of a positive semidefinite matrix is larger than the determinant,
\begin{equation}\label{eq:pergeqdet}
 \per(A)\geq\det(A)\quad\text{for all}\,\,A\geq 0.
\end{equation}
From this inequality one obtains the witness $W=P_{(n)}-P_{1^n}$, where $P_{(n)}$ is the projector onto the fully symmetric subspace of $n$ qudits and $P_{(1^n)}$ is the rank-1 projector onto the fully antisymmetric subspace. Well-known immanant inequalities~\cite{merris1997multilinear} like Eq.~\eqref{eq:pergeqdet} and their associated witnesses are listed in Table~\ref{tab:ImmInWit}. While for $n=2$ the SWAP witness $\eta_d(12)$ is obtained, for $n=3$ and arbitrary dimension $d$ these inequalities provide the following independent witnesses:
\begin{align}
W_{\text{Hadamard}}&=\one-3!P_{1^3}=\eta_d(12)+\eta_d(13)+\eta_d(23)-\eta_d(123)-\eta_d(132) \label{eq:W1-P111} \\ 
W_{\text{Schur}}&=\frac{3!P_{21}}{4}-3!P_{1^3}=\eta_d(12)+\eta_d(13)+\eta_d(23)-\frac{3}{2}\Big (\eta_d(123)+\eta_d(132)\Big ) \label{eq:WP21-P111} \\ 
W_{\text{Hook}}&=3!P_{3}-\frac{3!P_{21}}{4}=\eta_d(12)+\eta_d(13)+\eta_d(23)+\frac{3}{2}\Big (\eta_d(123)+\eta_d(132)\Big ) \label{eq:WP3-P21} \\ 
W_{\text{Marcus}}&=3!P_{3}-\one=\eta_d(12)+\eta_d(13)+\eta_d(23)+\eta_d(123)+\eta_d(132) \label{eq:WP3-1}\,,
\end{align}
where we denote $\eta_d(123)\ket{0}\ot\ket{1}\ot\ket{2}=\ket{2}\ot\ket{0}\ot\ket{1}$ and similarly for other permutations.

From these witnesses acting on ${(\C^d)}^{\ot k}$, one can construct nonlinear witnesses of the form $\WW = W_1\ot...\ot W_n$, as in Eq.~\eqref{eq:NonlinearWitness}. However, one can also go out of this scheme
and use the same tools to design other $n$-qudit nonlinear witnesses by considering immanant inequalities involving matrices of different sizes, products of matrices and combinations of standard immanant inequalities~\cite{merris1997multilinear}. Below we sketch graphically some examples for the bipartite case. Given a partition $\lambda$ of a natural number $r$, we denote $\Tilde{P}_\lambda=r!P_{\lambda}/{\chi_{\lambda}}^2$ with $\chi_{\lambda}=\chi_{\lambda}(\id)$.
\begin{align}
&\,\,\text{{\bf Assumption}}
&\text{{\bf Matrix inequality}} \quad\quad
& 
&\text{{\bf Witness expression}}\quad\quad\quad 
&\phantom{,}\nonumber\\\nonumber\\
&A\geq 0\,,\quad B\geq 0
&\begin{aligned}
&\imm_\lambda(A)/\chi_\lambda\geq\imm_\gamma(A)/\chi_\gamma\\
&\imm_\mu(B)/\chi_\mu\geq\imm_\nu(B)/\chi_\nu
\end{aligned}
&
&\vcenter{\hbox{\includegraphics[]{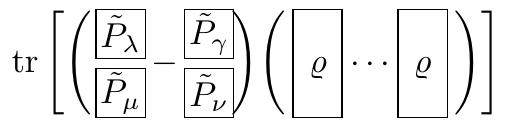}}}
\label{eq:AppImmsProd}
\\\nonumber\\
&\begin{pmatrix}
A_{11} & A_{12} \\
A_{12}^\dag & A_{22}
\end{pmatrix}\geq 0
&\per(A_{11})\per(A_{22})\leq\per(A)
&
&\vcenter{\hbox{\includegraphics[]{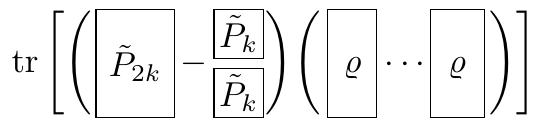}}}
\label{eq:AppBlockPer}
\\\nonumber\\
&A \geq 0\,,\quad B \geq 0
&\det(A\circ B)\geq\det(A)\prod_{i=1}^kB_{ii}
&
&\vcenter{
\hbox{\includegraphics[]{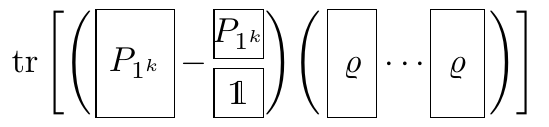}}}
\label{eq:AppHadDet} 
\end{align}
Here some notation needs to be clarified. In Eq.~\eqref{eq:AppImmsProd} $P_{\lambda}$, $P_{\mu}$, $P_{\gamma}$ and $P_{\nu}$ are the $k$-partite Young projectors corresponding to partitions of $k$ labelled by $\lambda$, $\mu$, $\gamma$ and $\nu$ respectively. In Eq.~\eqref{eq:AppBlockPer}, $P_{2k}$ is the $(2k)$-partite symmetrizer with local dimension $d$, which acts across all copies and all parties ($2k$ local subspaces in total). In Eq.~\eqref{eq:AppHadDet}, $P_{1^k}$ inside the large box is the antisymmetrizer of $k$ parties with local dimension $d^2$. That is, it reads
\begin{equation}
    P_{1^k} = \frac{1}{k!}\sum_{\sigma\in S_k} (-1)^{\sign(\sigma)}\sigma_A\ot\sigma_B\,,
\end{equation}
where each $\sigma_A$ permutes $k$ copies of the party $A$ and $\sigma_B$ permutes $k$ copies of the party $B$.

\section{Adapting witnesses to spectra}\label{App:Spectra}
The question of what states are better detected can be addressed using the fact that the global and local spectra of quantum states influence the inner product between Young projectors and tensor copies of the state at hand~\cite{Keyl-WernerEstimatingSpectra_2001,Chris_Spectra_marg_2005}. As an example, the following inequality involving a bipartite quantum state $\varrho_{AB}$ and two local projectors $P$ and~$Q$~\cite{Chris_Spectra_marg_2005},
\begin{equation}\label{eq:InequalityProjectors}
    \tr\big ( (P\otimes Q){} {\varrho_{AB}} ^{\ot k}\big )\geq\tr\big (P\varrho_A ^{\ot k}\big )+\tr\big (Q\varrho_B ^{\ot k}\big )-1\,,
\end{equation}
suggests that states with reduction $\varrho_A$ maximizing the overlap $\tr(P_{1^k}\varrho_A^{\ot k})$, also maximize the negative term in Eq.~\eqref{eq:AppHadDet}. 
In this direction, nonlinear witnesses from trace polynomials give rise to nonlinear separability criteria like the following one.
\begin{observation}\label{obs:CriteriaRank}
Let $\varrho$ be a bipartite state with local dimension $d$ such that $\rank(\varrho)<d$. The state $\varrho$ is detected by Eq.~\eqref{eq:AppHadDet} if it has at least one maximally mixed local reduction $\varrho_A\propto\one$.
\end{observation}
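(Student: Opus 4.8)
The plan is to make the witness behind Eq.~\eqref{eq:AppHadDet} explicit and then reduce its value on $\varrho^{\ot k}$ to a comparison of two elementary symmetric polynomials, one in the spectrum of $\varrho$ and one in the spectrum of its reduction $\varrho_A$. First I would translate Oppenheim's determinant inequality $\det(A\circ B)\geq\det(A)\prod_iB_{ii}$ through Proposition~\ref{prop:MaassenWitness}. Reading $A$ and $B$ as Gram matrices of vectors $\ket{a_i}$ and $\ket{b_i}$, the Hadamard product $A\circ B$ is the Gram matrix of the product vectors $\ket{a_i}\ot\ket{b_i}\in\C^{d^2}$; hence $\det(A\circ B)$ corresponds to the antisymmetrizer on $(\C^{d^2})^{\ot k}$, whereas $\det(A)$ corresponds to the antisymmetrizer on party $A$ alone and $\prod_iB_{ii}$ to $\one_B^{\ot k}$. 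Up to a positive factor the witness is therefore
\begin{equation}
\WW\;\propto\;P_{1^k}^{AB}-P_{1^k}^{A}\ot\one_B^{\ot k}\,,
\end{equation}
where $P_{1^k}^{AB}$ antisymmetrizes the $k$ copies of the joint system and $P_{1^k}^{A}$ antisymmetrizes the $k$ copies of $A$ alone. Tracing out $B$ in the second term brings it onto the reduction, so that
\begin{equation}
\tr\big(\WW\varrho^{\ot k}\big)\;\propto\;\tr\big(P_{1^k}^{AB}\varrho^{\ot k}\big)-\tr\big(P_{1^k}^{A}\varrho_A^{\ot k}\big)\,.
\end{equation}

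Next I would invoke the spectral identity $\tr(P_{1^k}\sigma^{\ot k})=e_k(\spec(\sigma))$, with $e_k$ the $k$-th elementary symmetric polynomial in the eigenvalues of $\sigma$. This follows by expanding $P_{1^k}=\tfrac{1}{k!}\sum_\pi\sign(\pi)\,\pi$ and applying Eq.~\eqref{eq:tprodtocontraction}: each $\pi$ contributes $\sign(\pi)\prod_{c}\tr(\sigma^{|c|})$ over its cycles $c$, and the summation is exactly the Newton expansion of $e_k$ in the power sums $\tr(\sigma^m)$. Applying this to both terms gives $\tr(\WW\varrho^{\ot k})\propto e_k(\spec(\varrho))-e_k(\spec(\varrho_A))$, and since $\WW$ is a genuine witness a negative value certifies entanglement.

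Finally I would use the two hypotheses by choosing the number of copies $k$ in the window $\rank(\varrho)<k\leq d$, which is nonempty precisely because $\rank(\varrho)<d$ (for instance $k=\rank(\varrho)+1$). As $\varrho$ then has fewer than $k$ nonzero eigenvalues, every $k$-fold product entering $e_k(\spec(\varrho))$ contains a vanishing factor, so the first term is $0$. For the second term, $\varrho_A\propto\one$ has the full-length spectrum $(1/d,\dots,1/d)$ with $d\geq k$ entries, whence $e_k(\spec(\varrho_A))=\binom{d}{k}d^{-k}>0$. Therefore $\tr(\WW\varrho^{\ot k})\propto-\binom{d}{k}d^{-k}<0$, and $\varrho$ is detected.

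The hard part is not any single computation but pinning down the precise operator form of the pictorial witness in Eq.~\eqref{eq:AppHadDet}---in particular which party carries the determinant (and thus the antisymmetrizer) and which carries the diagonal product (and thus the identity)---together with fixing the normalization in the spectral identity. Once these are settled, the conclusion rests on the elementary fact that the admissible window $\rank(\varrho)<k\leq d$ is nonempty exactly when the global state is rank deficient while its marginal, being maximally mixed, retains full rank $d$. It is precisely this gap between a rank-deficient joint state and a full-rank reduction that $\WW$ exploits, with the maximal-mixedness assumption entering only as a clean sufficient condition guaranteeing $\rank(\varrho_A)=d$.
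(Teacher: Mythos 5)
Your proposal is correct and reaches the paper's conclusion by a genuinely different route. You identify the same witness $\WW\propto P_{1^k}^{(d^2)}-P_{1^k}^{(d)}\ot\one_d^{\ot k}$ and exploit the same window $\rank(\varrho)<k\leq d$, but the two evaluations are done differently. The paper invokes external spectrum-estimation results of Christandl and Mitchison~\cite{Chris_Spectra_marg_2005}: their Theorem~1 to bound $\tr\big(P_{1^k}^{(d^2)}\varrho^{\ot k}\big)$ by a constant times the product of the $k$ largest eigenvalues (hence zero for rank-deficient $\varrho$), and their Corollary~2 to guarantee the existence of some $k_0\leq d$ with $\tr\big(P_{1^k}\varrho_A^{\ot k}\big)>0$ for all $k\geq k_0$. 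You instead use the exact identity $\tr\big(P_{1^k}\sigma^{\ot k}\big)=e_k(\spec\sigma)$, which indeed follows from the cycle-trace formula of Eq.~\eqref{eq:tprodtocontraction} together with the expansion of the elementary symmetric polynomial in power sums (equivalently, it is the trace of $\sigma$ on the $k$-th exterior power). This buys exactness and explicitness that the paper's argument lacks: the first term vanishes identically whenever $\rank\varrho<k$, and the second equals $\binom{d}{k}d^{-k}>0$ for \emph{every} $k\leq d$, so any $k$ in the window works and you even get the size of the violation up to normalization; by contrast, the paper's reliance on Corollary~2 is only an existence statement and implicitly requires intersecting $k\geq k_0$ with the window. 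What the paper's route buys in exchange is its tie-in with the spectrum-estimation theme of that appendix, and Theorem~1's inequality remains informative for states that are only approximately low rank, where your exact identity gives a small but nonvanishing first term rather than zero. One point you rightly flag but could state more firmly: both terms of the witness inherit the same positive constant ($k!$ from each determinant in Eq.~\eqref{eq:AppHadDet}), so the sign argument ``up to a positive factor'' is sound, and negativity certifies entanglement because separability of $\varrho$ makes each term of $\tr(\WW\varrho_{\sep}^{\ot k})$ an instance of Oppenheim's inequality applied to Gram matrices, exactly as in Proposition~\ref{prop:MaassenWitness}.
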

\begin{proof}
Let us start from the witness of Eq.~\eqref{eq:AppHadDet},
\begin{equation}
    \WW = P^{(d^2)}_{1^k}-P^{(d)}_{1^k}\ot\one_d^{\ot k}\,.
\end{equation}
Here $P^{(d^2)}_{1^k}$ is the antisymmetrizer acting on $k$ copies of the full bipartite Hilbert space $\C^d\ot\C^d$; $P^{(d)}_{1^k}$ is the antisymmetrizer acting on $k$ copies of the Hilbert space $\C^d$ of the party $A$; and $\one_d^{\ot k}$ is the identity operator acting on $k$ copies of the Hilbert space $\C^d$ of the party $B$. We will restrict ourselves to $k\leq d$ copies, as otherwise $P^{(d)}_{1^k}=0$ and therefore $\WW$ is positive semidefinite.

Now let $\varrho$ have spectrum $\{r_1,...,r_{d^2}\}$ ordered by $r_1\geq r_2\geq...\geq r_{d^2}$. As a particular case of Theorem 1 in~\cite{Chris_Spectra_marg_2005} we have the following bound,
\begin{equation}\label{eq:BoundChristandl}
\tr\big (P_{1^k}^{(d^2)}\varrho ^{\ot k}\big )\leq (k+1)^{d^2(d^2-1)/2} k\prod_{i=1}^k r_i\,,
\end{equation}
where $\{r_i\}$ are the largest $k$ eigenvalues of $\varrho$. By assumption, $\rank\varrho<d$ and consider a number of copies between $\rank\varrho$ and $d$, $\rank\varrho<k\leq d$. Then there exists at least a zero eigenvalue $r_{i^{*}}=0$ with $i^{*}\leq k$ and the non-negative term of $\langle\WW\rangle_{\varrho^{\ot k}}$ vanishes,
\begin{equation}
\tr\big (P_{1^k}^{(d^2)}{} \varrho ^{\ot k}\big ) = 0\,.
\end{equation}
In this case, the expectation value of $\WW$ is either 0 if $\tr\big (P_{1^k}^{d}\varrho_A ^{\ot k}\big )=0$, or strictly negative if $\tr\big (P_{1^k}^{d}\varrho_A ^{\ot k}\big )>0$. By Corollary 2 in~\cite{Chris_Spectra_marg_2005}, if $\varrho_A$ is maximally mixed, then there exists a number of copies $k_0\leq d$ such that for all $k\geq k_0$, its projection onto the antisymmetric subspace is nonzero,
\begin{equation}
    \tr(P_{1^k}\varrho_A^{\ot k}) > 0\,.
\end{equation}
Therefore the expectation value of $\WW$ is strictly negative, which means that $\varrho$ is entangled.
\end{proof}
\noindent Remark: That such state is entangled can also be seen by a variant of the range criterion~\cite{Horo1997PPT2x4}.
\section{Average expectation value over Haar random states}\label{app:HaarAverage}
Here we will apply the entanglement concentration technique to Haar random states. We will evaluate the average expectation value of the nonlinar witness $\WW=(\one-k!P_{1^3})\ot {P_{1^3}}^{\ot n-1}$ taken over $k$ copies of Haar-random states
shared among $n$ inhomogeneous systems, $\ket{\psi}\in\C^d$ with $d=d_1\cdots d_n$. To have nonvanishing antisymmetrizers $P_{1^k}$ acting on ${\C^{d_i}}^{\ot k}$, we will assume $k\leq d_{\min}$ where $d_{\min}$ is the minimal local dimension. The expression to evaluate is
\begin{equation}
    \mathbb{E}[\langle\WW\rangle] = \int_{U\in\mathbf{U}(d)} \tr\left(\WW \big (U \ket{\psi}\bra{\psi}U^{\dag}\big )^{\ot k}\mathbf{d}_\mu(U)\right)\,.
\end{equation}
Due to Theorem 4.3 in~\cite{KuengHaar_2019}, it holds that
\begin{equation}
\int_{U\in\mathbf{U}(d)} \big (U \ket{\psi}\bra{\psi}U^{\dag}\big )^{\ot k}\mathbf{d}_\mu(U) = \binom{d+k-1}{k}^{-1}\frac{1}{k!}\sum_{\sigma_d\in S_k} \sigma_d\,,
\end{equation}
where each permutation $\sigma_d=\sigma_{d_1}\ot\cdots\ot\sigma_{d_n}$ permutes $k$ tensor factors of dimension $d=d_1\cdots d_n$. Then
\begin{equation}\label{eq:EhaarBin}
\mathbb{E}[\langle\WW\rangle]\binom{d+k-1}{k} = \frac{1}{k!} \sum_{\sigma\in S_k}\tr(\sigma_{d_1})\prod_{i=2}^n\tr(P_{1^k}\sigma_{d_i}) - k!\prod_{i=1}^n\tr(P_{1^k}\sigma_{d_i})\,.
\end{equation}
Now we need to make use of the following property of the antisymmetrizer,
\begin{equation}\label{eq:PropAntiPerm}
    P_{1^k}\sigma = (-1)^{\sign(\sigma)}P_{1^k}\,.
\end{equation}
To see this, first note that (1) the product of two permutations $\sigma,\pi\in S_k$ is again a permutation $\tau\in S_k$ with $\sign(\tau)=\sign(\sigma)\sign(\pi)$, and (2) given $\sigma,\pi,\pi'\in S_k$ with $\pi\neq\pi'$, one has $\sigma\pi\neq\sigma\pi'$. Now we can insert $1=(-1)^{\sign(\sigma)}(-1)^{\sign(\sigma)}$ to Eq.~\eqref{eq:PropAntiPerm} and obtain
\begin{align}
    \sigma P_{1^k}&=\frac{1}{k!}\sum_{\pi\in S_k}(-1)^{\sign(\pi)}(-1)^{\sign(\sigma)}(-1)^{\sign(\sigma)}\sigma\pi \\
    &= \frac{1}{k!}(-1)^{\sign(\sigma)}\sum_{\tau\in S_k}(-1)^{\sign(\tau)}\tau \\
    &= (-1)^{\sign(\sigma)}P_{1^k}\,.
\end{align}
With this one can reduce Eq~\eqref{eq:EhaarBin} to
\begin{equation}
\mathbb{E}[\langle\WW\rangle]\binom{d+k-1}{k} = \frac{1}{k!}\sum_{\sigma\in S_k} \tr(\sigma)\tr(P_{1^k})^{n-1}\Big ((-1)^{\sign(\sigma)}\Big )^{n-1} - k!\sum_{\sigma\in S_k} \tr(P_{1^k})^{n}\Big ((-1)^{\sign(\sigma)}\Big )^{n}\,.
\end{equation}
Now we need to distinguish between $n$ being even and odd. If $n$ is even, then we have
\begin{align}
\mathbb{E}[\langle\WW\rangle] &\propto \frac{1}{k!}\tr(P_{1^k})^{n-1}\bigg ( \sum_{\sigma\in S_k}\tr(\sigma)(-1)^{\sign(\sigma)} - k!^2\tr(P_{1^k}) \bigg ) \\
&= \tr(P_{1^k})^n(1-k!)<0\,,
\end{align}
since by definition $\sum_{\sigma\in S_k}\tr(\sigma)(-1)^{\sign(\sigma)}=k!\tr(P_{1^k})$ holds. If $n$ is odd, one can similarly show that $\mathbb{E}[\langle\WW\rangle] = 0$.

\section{Sampling Werner states in the group ring}\label{app:WGFormalism}
We explain how we sample random Werner states in Fig.~\ref{fig:WernerDetProjections} following the lines of Ref.~\cite{Huber2022DimFree}.
Let $\sigma\in S_n$ and let $\eta_d(\sigma)$ be its representation permuting the $n$ tensor factors of $(\mathbb{\C}^d)^{\ot n}$,
\begin{equation}
\eta_d(\sigma)\ket{i_1} \ot \dots\ot \ket{i_n}
=
\ket{i_{\sigma^{-1}(1)}} \ot \dots \ot \ket{i_{\sigma^{-1}(n)}}\,.
\end{equation}
For each element $a=\sum_{\sigma\in S_n}\alpha_\sigma\sigma \in \C S_n$, we denote its representation in the Hilbert space as $\eta_d(a)$. For each partition of $n$ labeled by $\lambda$, define the centrally primitive idempotent
\begin{equation}
\omega_\lambda = \frac{\chi_\lambda(\id)}{n!}\sum_{\sigma\in S_n}\chi_\lambda(\sigma)\sigma^{-1}\,
\end{equation}
and its representation in the Hilbert space $P_\lambda=\eta_d(\omega_\lambda)$, where $\chi_\lambda(\id)$ is the character of the identity in the irreducible representation labeled by $\lambda$. For example, $P_{1^n}$ is the projector onto the antisymmetric subspace of $(\C^d)^{\ot n}$. 

We now define a trace $\tau$ on the elements of $S_n$ as
\begin{equation}
\tau(\sigma)=
\begin{cases}
n!\quad\text{if $\sigma=\id$} \\
0\quad\text{else}
\end{cases},
\end{equation}
and the Weingarten operator $\Wg(d,n) = \eta_d(\wg(d,n))$ where~\cite{Collins2006IntHaar,Procesi2020noteWg}
\begin{equation}\label{eq:defwg}
    \wg(d,n)=\frac{1}{n!}\sum_{\substack{\lambda \vdash n \\ |\lambda|\leq d}}\frac{\tau(\omega_\lambda)}{\tr(P_\lambda)}\omega_\lambda \,.
\end{equation}
We denote $J_d=\ker(\eta_d)^\perp$ the subspace of $\mathbb{\C}S_n$ with nonzero representation $\eta_d$ acting on $(\mathbb{\C}^d)^{\ot n}$,
\begin{equation}
J_d = \sum_{\substack{\lambda \vdash n \\ |\lambda|\leq d}} \omega_\lambda \C S_n\,,
\end{equation}
which is spanned by $\omega_\lambda$ such that the height of the Young tableau $\lambda$ is not larger than $d$. Now we are ready for the following Lemma.
\begin{lemma}[Lemma 1 in \cite{Huber2022DimFree}]
\label{lem:Trace-Tau}
\item (1) For all $r\in J_d$ and $b\in\C S_n$ it holds that
\begin{equation}
n!\tr\big (\Wg(d,n)\eta_d(r)\eta_d(b)\big ) = \tau(rb)\,.
\end{equation}
\item (2) Let $r\in J_d$. Then $r$ is a state if and only if $n!\Wg(d,n)\eta_d(r)$ is a state in $L((\C^d)^{\ot n})$.
\end{lemma}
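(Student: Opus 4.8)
My plan is to reduce both statements to the Schur--Weyl block decomposition of $\eta_d$ and to evaluate them block by block. First I would record the structural facts. By Schur--Weyl duality $(\C^d)^{\ot n}\cong\bigoplus_\lambda V_\lambda\ot W_\lambda$, the sum running over partitions $\lambda\vdash n$ with $\height(\lambda)\le d$, where $\C S_n$ acts through the irreducible representation $\rho_\lambda$ on $V_\lambda$ (with $\dim V_\lambda=\chi_\lambda(\id)$) and trivially on the multiplicity space $W_\lambda$ (with $\dim W_\lambda=m_\lambda$). Consequently $\eta_d(a)=\bigoplus_\lambda\rho_\lambda(a)\ot\one_{W_\lambda}$ for every $a\in\C S_n$, the Young projector $P_\lambda=\eta_d(\omega_\lambda)$ is the projection onto the $\lambda$-block with $\tr(P_\lambda)=\chi_\lambda(\id)\,m_\lambda$, and $J_d=\bigoplus_{\height(\lambda)\le d}\omega_\lambda\C S_n$ is precisely the ($\tau$-orthogonal) complement of $\ker\eta_d$, on which $\eta_d$ is faithful.

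Next I would isolate the single scalar that governs the whole lemma. Since each $\omega_\mu$ is a central idempotent, $\wg(d,n)$ is central, so by Schur's lemma $\rho_\lambda(\wg(d,n))=\tfrac{1}{n!}\tfrac{\tau(\omega_\lambda)}{\tr(P_\lambda)}\one_{V_\lambda}$; a one-line evaluation using $\omega_\lambda=\tfrac{\chi_\lambda(\id)}{n!}\sum_\sigma\chi_\lambda(\sigma)\sigma^{-1}$ and the definition of $\tau$ gives $\tau(\omega_\lambda)=\chi_\lambda(\id)^2$. Hence $\Wg(d,n)$ acts on the $\lambda$-block as the scalar $\chi_\lambda(\id)/(n!\,m_\lambda)$, and
\[
n!\,\Wg(d,n)\,\eta_d(r)=\bigoplus_{\height(\lambda)\le d}\frac{\chi_\lambda(\id)}{m_\lambda}\,\rho_\lambda(r)\ot\one_{W_\lambda}\,.
\]
For part (1) I would use that $\eta_d$ is an algebra homomorphism, so $\eta_d(r)\eta_d(b)=\eta_d(rb)$ with $rb\in J_d$ ($J_d$ being a two-sided ideal). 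Tracing the block form at $rb$ yields $\sum_{\height(\lambda)\le d}\chi_\lambda(\id)\chi_\lambda(rb)$; since $r\in J_d$ forces $\chi_\lambda(rb)=0$ for $\height(\lambda)>d$, this equals the full sum $\sum_{\lambda\vdash n}\chi_\lambda(\id)\chi_\lambda(rb)$, which is the regular-representation character of $rb$ and therefore equals $\tau(rb)$ by definition of $\tau$.

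For part (2) I would read positivity and normalization off the same block form. With the standard involution $\sigma^*=\sigma^{-1}$ on $\C S_n$, an element $r\in J_d$ is a state exactly when it is positive (all $\rho_\lambda(r)\ge 0$) and $\tau$-normalized ($\tau(r)=1$). As $\chi_\lambda(\id)/m_\lambda>0$, the block form shows $n!\,\Wg(d,n)\,\eta_d(r)\ge 0$ if and only if every $\rho_\lambda(r)\ge 0$, which for $r\in J_d$ (the remaining blocks vanishing) is the same as $r\ge 0$. Setting $b=\id$ in part (1) gives $\tr\big(n!\,\Wg(d,n)\,\eta_d(r)\big)=\tau(r)$, so unit trace on the Hilbert space matches $\tau(r)=1$; Hermiticity transfers because $\Wg(d,n)$ is central and self-adjoint while $\eta_d(r^*)=\eta_d(r)^\dagger$, with $\eta_d$ faithful on $J_d$. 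Together these yield the stated equivalence.

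The main obstacle is bookkeeping rather than any single hard idea: one must pin down the multiplicities via $\tr(P_\lambda)=\chi_\lambda(\id)m_\lambda$, verify $\tau(\omega_\lambda)=\chi_\lambda(\id)^2$, and---most importantly---justify the truncation to $\height(\lambda)\le d$, namely that membership of $r$ in $J_d$ is exactly the condition making the restricted Schur--Weyl sum reproduce the full regular character and hence $\tau$. Fixing the precise notion of a \emph{state} in $\C S_n$ (positivity as $\rho_\lambda(r)\ge 0$ for all $\lambda$ together with $\tau$-normalization) and confirming that centrality of $\wg(d,n)$ makes $\Wg(d,n)$ a block scalar are the remaining points that require care.
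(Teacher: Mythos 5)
This lemma is imported verbatim from Ref.~\cite{Huber2022DimFree} (``Lemma 1 in \cite{Huber2022DimFree}'') and the present paper contains no proof of it, so there is nothing internal to compare against; your argument therefore has to stand on its own, and it does. The Schur--Weyl block decomposition $\eta_d(a)=\bigoplus_{\height(\lambda)\le d}\rho_\lambda(a)\ot\one_{W_\lambda}$, the evaluations $\tau(\omega_\lambda)=\chi_\lambda(\id)^2$ and $\tr(P_\lambda)=\chi_\lambda(\id)m_\lambda$, and hence the block-scalar form $n!\,\Wg(d,n)\eta_d(r)=\bigoplus_\lambda \tfrac{\chi_\lambda(\id)}{m_\lambda}\rho_\lambda(r)\ot\one_{W_\lambda}$ are all correct, and you handle the two genuinely delicate points properly: (i) the truncation step, i.e.\ that $J_d$ is a two-sided ideal so $rb\in J_d$ kills the characters with $\height(\lambda)>d$ and lets the restricted sum equal the regular character $\sum_{\lambda\vdash n}\chi_\lambda(\id)\chi_\lambda(rb)=\tau(rb)$; and (ii) that positivity, Hermiticity and normalization transfer blockwise (with $b=\id$ in part (1) matching $\tau(r)=1$ to unit trace), using faithfulness of $\eta_d$ on $J_d$ and the positivity of the scalars $\chi_\lambda(\id)/m_\lambda$. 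Your notion of a state in $\C S_n$ (positive element, $\tau$-normalized) is also the one the paper uses operationally in Eq.~\eqref{eq:appRandomNormalizedState}. This is the standard character-theoretic route and, as far as one can tell, essentially the argument the cited reference itself relies on; no gaps.
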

The first part of Lemma~\ref{lem:Trace-Tau} allows us to compute inner products of operators which are representations of $\C S_n$ in the group ring. In order to evaluate expressions of the form~\eqref{eq:non-lin_precise}, one needs to compute tensor products of states and witnesses. For that, we will need the following variation of the first part of Lemma~\ref{lem:Trace-Tau}.
\begin{lemma}
For all $a\in J_d$ and $b\in\C S_n$ it holds that
\begin{equation}\label{eq:Trace-TauRestated}
\tr\big (\eta_d(a)\eta_d(b)\big ) = \frac{\tau(\wg(d,n)^{-1}ab)}{n!}\,.
\end{equation} 
\end{lemma}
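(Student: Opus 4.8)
The plan is to reduce the claimed identity directly to part~(1) of Lemma~\ref{lem:Trace-Tau}, exploiting that $\eta_d$ is an algebra homomorphism and that $\Wg(d,n)=\eta_d(\wg(d,n))$. The key move is to absorb the factor $\Wg(d,n)$ appearing in Lemma~\ref{lem:Trace-Tau}(1) into the argument $a$: setting $r:=\wg(d,n)^{-1}a$, I would rewrite $\eta_d(a)$ as $\Wg(d,n)\eta_d(r)$, turning the target trace $\tr(\eta_d(a)\eta_d(b))$ into an instance of the known relation $n!\tr(\Wg(d,n)\eta_d(r)\eta_d(b))=\tau(rb)$.

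First I would make sense of $\wg(d,n)^{-1}$. Since $\wg(d,n)=\frac{1}{n!}\sum_{|\lambda|\leq d}\frac{\tau(\omega_\lambda)}{\tr(P_\lambda)}\omega_\lambda$ is a central element supported on the orthogonal idempotents $\omega_\lambda$ with height at most $d$ spanning $J_d$, it acts on each such block as the scalar $c_\lambda=\frac{1}{n!}\frac{\tau(\omega_\lambda)}{\tr(P_\lambda)}$. A short computation gives $\tau(\omega_\lambda)=\chi_\lambda(\id)^2>0$ and $\tr(P_\lambda)>0$, so every $c_\lambda$ is nonzero and $\wg(d,n)$ is invertible inside the subalgebra $J_d$, whose unit is $e_d=\sum_{|\lambda|\leq d}\omega_\lambda$; its inverse acts as $c_\lambda^{-1}$ on each block. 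Because $J_d$ is a two-sided ideal and $a\in J_d$, the product $r=\wg(d,n)^{-1}a$ again lies in $J_d$, which is precisely the hypothesis required to invoke Lemma~\ref{lem:Trace-Tau}(1). The substitution itself is then immediate: using the homomorphism property, $\Wg(d,n)\eta_d(r)=\eta_d\big(\wg(d,n)\wg(d,n)^{-1}a\big)=\eta_d(e_d a)=\eta_d(a)$, where the last step uses $e_d a=a$ for $a\in J_d$. Feeding $r$ and $b$ into Lemma~\ref{lem:Trace-Tau}(1) yields $n!\tr(\eta_d(a)\eta_d(b))=n!\tr(\Wg(d,n)\eta_d(r)\eta_d(b))=\tau(rb)=\tau(\wg(d,n)^{-1}ab)$, and dividing by $n!$ gives the claim.

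The one point requiring genuine care --- and the main obstacle --- is the meaning and well-definedness of $\wg(d,n)^{-1}$: it is not invertible in all of $\C S_n$, since it vanishes on the blocks of height larger than $d$, so the inverse must be understood within $J_d$. The argument hinges on checking both that $c_\lambda\neq 0$ for all admissible $\lambda$ and that $r=\wg(d,n)^{-1}a$ remains in $J_d$, so that the membership hypothesis of Lemma~\ref{lem:Trace-Tau}(1) is genuinely satisfied; once this is settled, the remainder is a one-line manipulation.
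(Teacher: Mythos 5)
Your proof is correct and takes essentially the same route as the paper: invert $\wg(d,n)$ inside the ideal $J_d$ (where it acts as nonzero scalars on the blocks $\omega_\lambda$ with $|\lambda|\leq d$) and reduce the claim to Lemma~\ref{lem:Trace-Tau}(1). Your substitution $r=\wg(d,n)^{-1}a$ is in fact the correct direction of the reduction --- the paper's proof writes ``let $r=\wg(d,n)a$'', which is evidently a slip for $a=\wg(d,n)r$ --- and your explicit checks that each block scalar is nonzero, that $r\in J_d$, and that $e_d a=a$ supply exactly the details the paper leaves implicit.
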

\begin{proof}
First note that the inverse of the Weingarten operator in $\C S_n$ can be written as
\begin{equation}
    \wg(d,n)^{-1} = n!\sum_{\substack{\lambda \vdash n \\ |\lambda|\leq d}}\frac{\tr(P_\lambda)}{\tau(\omega_\lambda)}\omega_\lambda
\end{equation}
since the sum of any subset of isotypic components $\omega_\lambda$ is the identity in the subspace they have support on. Now let $r=\wg(d,n)a$ and assume $a\in J_d$. Lemma~\ref{lem:Trace-Tau} applies because $r\in J_d$ as well, and thus Eq.~\eqref{eq:Trace-TauRestated} follows from Eq.~\eqref{eq:Trace-Tau}.
\end{proof}
Now we are able to compute the expectation value of $n$ tensor products of operators $W_i=\eta_d(\omega_i)$ with respect to $k$ copies of a quantum state $\varrho=n!\Wg(d,n)\eta_d(r)$, using that the representation $\eta_d$ is linear. We have
\begin{equation}\label{eq:trTauTensProd}
\begin{aligned}
    \tr(\varrho^{\ot k}W_1\ot...\ot W_n) &= 
    \tr\big (n!^k \Wg(d,n)^{\ot k}\eta_d(r)^{\ot k}\eta_d(w_1)\ot...\ot\eta_d(w_n)\big ) \\
    &= n!^k\tr\big (\eta_d(\wg(d,n)^{\times k}r^{\times k})\eta_d(w_1\times...\times w_n)\big ) \\
    &= \frac{n!^k}{(kn)!}\tau\big (\wg(d,kn)^{-1}\wg(d,n)^{\times k}r^{\times k}w_1\times...\times w_n\big )\,.
\end{aligned}
\end{equation}
This gives us the following practical recipe to sample Werner states. First we fix an operator $\WW$ of the form~\eqref{eq:NonlinearWitness} which is a tensor product of witnesses and positive operators $W_i=\eta_d(w_i)$ with $w_i\in\C S_k$. Then we compute iteratively its expectation values with respect to $k$-fold copies of $n$-partite Werner states, according to Eq.~\eqref{eq:trTauTensProd}. At each iteration, we take $s=\sum_\sigma s_\sigma\sigma\in\C S_n$ choosing the coefficients $s_\sigma \in \C$ randomly, and compute
\begin{equation}\label{eq:appRandomNormalizedState}
r=\frac{ss^\dag}{\tau(ss^\dag)}
\end{equation}
with $s^\dag=\sum_\sigma s_\sigma^*\sigma^{-1}$, such that $r$ is a state in $\C S_n$ and therefore $\varrho = n!\Wg(d,n)\eta(r)$ is a quantum state in the Hilbert space $\cdn$. Finally we compute the expectation value $\langle\WW\rangle_{\varrho^{\ot k}}$ via Eq.~\eqref{eq:trTauTensProd}.
\end{widetext}
\end{document}